\colorlet{mygray}{gray!10!white}
\def\tos{\rightrightarrows}
\def\reals{\mathbbm{R}}
\def\naturals{\mathbbm{N}}
\def\cl{\mathop{\rm cl}}
\def\co{\mathop{\rm co}}
\def\ri{\mathop{\rm ri}}
\def\inte{\mathop{\rm int}}
\def\supp{\mathop{\rm supp}\nolimits}
\def\F{{\cal F}}
\def\N{{\cal N}}
\def\FF{(\F_t)_{t=0}^T}
\def\reals{\mathbbm{R}}
\def\minimize{\mathop{\rm minimize}\limits}
\def\ovr{\mathop{\rm over}\ }
\def\st{\mathop{\rm subject\ to}}
\newtheorem{theorem}{Theorem}
\newtheorem{example}[theorem]{Example}
\newtheorem{remark}[theorem]{Remark}
\title{Statistical modeling of SOFR term structure}
\author{Teemu Pennanen\thanks{Department of Mathematics, King's College London, Strand, London, WC2R 2LS, United Kingdom, teemu.pennanen@kcl.ac.uk} \and Waleed Taoum\thanks{Department of Mathematics, King's College London, Strand, London, WC2R 2LS, United Kingdom, waleed.taoum@kcl.ac.uk}}
\date{November 4, 2025}
\begin{document}

\pdfbookmark{Table of Contents}{name} 

\maketitle

\begin{abstract}
SOFR derivatives market remains illiquid and incomplete, so it is not amenable to classical risk-neutral term structure models which are based on the assumption of perfect liquidity and completeness. This paper develops a statistical SOFR term structure model that is well-suited for use in risk management and derivatives pricing models within the incomplete markets paradigm. The model incorporates relevant macroeconomic factors that drive central bank policy rates which, in turn, cause jumps often observed in the SOFR rates. The model is easy to calibrate to historical data, current market quotes, and the user's views concerning the future development of the relevant macroeconomic factors. The model is well suited for large-scale simulations often required in risk management, portfolio optimization and indifference pricing of interest rate derivatives.
\end{abstract}

\section{Introduction}

SOFR, the new benchmark interest rate in the U.S., is an overnight rate computed daily based on repurchase agreement (repo) transactions. The term structure of interest rates is largely captured by exchange-traded futures contracts on SOFR averages over three- and one-month periods. In addition to the futures contracts, there are currently around 2,000 exchange-traded options on the futures rates for 20 different maturities. While the SOFR derivatives market has grown rapidly over the past years, it still exhibits significant bid-ask spreads and limited liquidity. The liquidity is particularly low for many of the listed options, which often have zero liquidity. 

Due to the illiquidity and incompleteness, classical arbitrage arguments are not well suited for pricing SOFR derivatives. Indeed, risk-neutral expectations yield linear pricing rules that are poor descriptions of observed market prices or the prices at which an agent might be willing to trade. In the face of incompleteness, the replication arguments behind risk-neutral valuations do not hold. In incomplete and illiquid markets, offered prices are best described by the indifference pricing principle, which is based on approximate hedging arguments capturing nonlinearities and the subjectivity of offered prices. Indifference pricing builds on a subjective (as opposed to risk-neutral) probabilistic description of the underlying rates, prices, and other relevant risk factors such as inflation and GDP; see e.g.~\cite{car9,pen14}.

This paper presents a statistical model that describes the development of the SOFR term structure under the {\em subjective measure}, also known as the $P$-measure, real-world measure, or statistical measure. Our model captures many of the statistical features observed in the market. Due to the Federal Reserve (FED) active participation in the repo market, SOFR tends to closely track the FED's policy rates. This causes jumps in SOFR whenever the FED changes its policy rates at the Federal Open Market Committee (FOMC) meeting dates. The FED rates are, in turn, largely driven by macroeconomic factors such as inflation and GDP. On the other hand, the macroeconomic factors are strongly influenced by the FED rates. While the SOFR is calculated at a daily frequency, the FED rates and the macroeconomic variables are updated on much lower frequencies. The FOMC meets roughly 8 times a year, while inflation and GDP estimates are published on a monthly frequency.

The model presented in this paper captures all the above features. The model includes a description of the development of macroeconomic variables, including the FED rate, on the FOMC meeting dates, which occur roughly on a monthly frequency. SOFR rate and the associated forward curve are then modelled in relation to the FED rate on a daily frequency. This model structure reflects the actual dependence of SOFR on the FED rates, and it captures the jumps observed in SOFR whenever the Fed decides to change the rates. The model is easy to calibrate to historical data, user's views and observed futures rates. It is fast in simulations, which makes it well-suited for both pricing and risk management applications. We illustrate the simulation model with indifference pricing of SOFR futures, options and swaptions. The pricing exercises here are meant only as an illustration of the convenience of the model in large-scale simulations. The pricing and hedging applications will be developed further in a follow-up paper, that builds on computational optimization of semi-static hedging strategies employing hundreds of exchange-traded SOFR derivatives.

Term structure models under $P$-measure are widely used by public organisations and asset managers. Many of them are based on the Nelson-Siegel and Svensson parameterizations, which describe the yield curve with three or four parameters, respectively; see \cite{nelson1987parsimonious, svensson1994estimating}. For example, \cite{diebold2006macroeconomy} develops a stochastic term structure model where the yield curve is statistically connected to macroeconomic variables. An extension with stochastic volatility was proposed in~\cite{bianchi2009great} while \cite{yu2011forecasting} incorporated corporate bonds. The model presented in this paper allows for more flexible parameterizations, and it incorporates jumps on FOMC meeting dates, which is an essential feature of the overnight lending market.

In derivative pricing, term structure models are often specified so that the corresponding zero-coupon bond prices have arbitrage-free dynamics. A common approach is to specify the term structure dynamics together with a risk-neutral measure (or, more generally, a pricing kernel) under which zero-coupon bond prices of all maturities are martingales. Risk-neutral term structure models for SOFR have been developed in \cite{bickersteth2021pricing, brace2024sofr, fang2024drives, fontana2024term, gellert2021short, lyashenko2019looking, schlogl2024term, skov2021dynamic} and their references. Statistical analysis in \cite{fang2024drives} finds that jumps in SOFR are driven by macroeconomic factors, but their stochastic term structure model does not capture the effect. Closer to the present work are the discrete-time models of \cite{ang2003no, lemke2008affine} which describe the joint dynamics of the term structure and relevant macroeconomic variables. 
In incomplete markets, the martingale measure (or the pricing kernel) is not unique and the chosen one is not necessarily the one that best captures statistical features of the term structure. Like our model, those of \cite{ang2003no} and \cite{lemke2008affine} describe the influence of inflation and GDP growth on the central bank policy rate but, contrary to the main role of central banking, the model of \cite{ang2003no} does not describe the influence of interest rates on the macroeconomy. Moreover, the models of both \cite{ang2003no} and \cite{lemke2008affine} involve latent variables, which makes the calibration of the models complicated. Our model, on the other hand is very easy to calibrate to both historical data, current term structure and user's views. Neither \cite{ang2003no} nor \cite{lemke2008affine} describe the possible jumps on the FOMC meeting dates.

The rest of this paper is organized as follows. Section~\ref{sec:str} constructs a SOFR term structure from futures quotes. Section~\ref{sec:par} proposes parameterizations of the term structure that can be calibrated to available futures quotes by simple quadratic minimization. Section~\ref{sec:tsm} introduces the joint model for the term structure and the macroeconomic variables. Based on the model built in Section~\ref{sec:tsm}, Section~\ref{sec:simstudy} presents a simulation study with indifference pricing of some SOFR derivatives. Section~\ref{sec:arb} takes a brief look at no-arbitrage conditions in discrete-time term structure models.

\section{The SOFR term structure}
\label{sec:str}

A {\em term structure} of interest rates specifies, at a given time and state, the \enquote{prices} of zero-coupon bonds for all maturities. If the zero-coupon bonds were tradeable, their prices should be consistent with observed quotes on interest rate derivatives. In practice, however, only finite numbers of zero-coupon bonds and derivatives are traded, and they all come with bid-ask spreads. 
Nevertheless, approximate term structure models can be useful in approximating prices of bonds and interest rate derivatives. This section reviews the specifications of the exchange-traded SOFR futures contracts and defines the overnight forward curve to be modeled in later sections.

\subsection{SOFR averages and futures}\label{sec:af}

While SOFR only reflects the financing rates observed for the most recent night, the late LIBOR rates were calculated for several maturities. These \enquote{forward-looking} questionnaire-based rates have been replaced by rates on exchange-traded SOFR futures contracts. Currently, CME Group lists 39 three-month futures contracts whose underlyings are geometric averages of the daily SOFR over three-month reference periods. In addition to the three-month futures, there are 13 one-month futures whose underlyings are arithmetic averages of SOFR over one-month reference periods.

The {\em geometric SOFR Average} for a reference period $[t_0,t_1]$ is defined by
\begin{equation}\label{eq:SOFR_Average}
R(t_0,t_1) := \frac{1}{(t_1-t_0)\delta} \left[ \prod_{t = t_0}^{t_1-1} \left(1 + r_t \delta \right) - 1\right],
\end{equation}
where $\delta:=1/360$, and $r_t$ is the SOFR rate applicable from calendar day $t$ to $t+1$ and published on $t+1$. The average $R(t_0,t_1)$ is observed at time $t_1$ when the last SOFR rate $r_{t_1-1}$ is published. SOFR is published only on business days. For bank holidays and weekends, the most recent business day rate is used in the calculation of the average.

For a three-month SOFR futures contract, the reference period $[t_0,t_1]$ starts on the third Wednesday of the third calendar month before the delivery month and ends on the third Wednesday of the delivery month.
According to the CME Rulebook \cite[Section~46003.A]{CMESOFRFut3m},
the payout of a long position is delivered at time $t_1$ and given by
\begin{equation}\label{eq:3mfutpo}
C = [F(t_0,t_1)-R(t_0,t_1)]\times 100\times \$ 2500
\end{equation}
where $F(t_0,t_1)$ is the {\em futures rate} agreed between the counterparties. The exchange-traded three-month futures rates come with bid-ask spreads. We will denote the bid- and ask-rates by $F^b(t_0,t_1)$ and $F^a(t_0,t_1)$, respectively. If we want to emphasize that the quotes were observed at time $t$, we will add a subscript and write $F_t^b(t_0,t_1)$ and $F_t^a(t_0,t_1)$.

\begin{remark}\label{rem:imm}
At CME, SOFR futures are quoted in terms of the \enquote{futures price} $f_t(t_0,t_1)$ which is related to the futures rate above by $F_t(t_0,t_1)=1-f_t(t_0,t_1)/100$; see \cite{imm}.
\end{remark}

The {\em arithmetic SOFR Average} for a reference period $[t_0,t_1]$ is defined by
\[
\hat R(t_0, t_1) := \frac{1}{(t_1-t_0)\delta} \sum_{t=t_0}^{t_1-1} r_t \delta.
\]
In a one-month SOFR futures contract, the reference period $[t_0,t_1]$ starts on the first calendar day and ends on the last day of the month. 
The payout of a long position in a one-month futures contract is delivered at time $t_1$ and given by
\[
C^1 = [F(t_0,t_1) - \hat R(t_0, t_1)] \times 100 \times \$4167
\]
where $F(t_0,t_1)$ is the {\em futures rate} agreed between the counterparties.

\begin{remark}\label{rem:sofravgtaylor}
Using the first order Taylor approximation $\ln x\approx x-1$ twice gives 
\begin{align*}
R(t_0,t_1) &\approx \frac{1}{(t_1-t_0)\delta} \ln\prod_{t=t_0}^{t_1-1} \left(1 + 
r_t \delta \right)\\
&=\frac{1}{(t_1-t_0)\delta} \sum_{t=t_0}^{t_1-1} \ln\left(1 + r_t \delta \right)\approx \hat R(t_0,t_1)
\end{align*}
so one can view the arithmetic average $\hat R(t_0,t_1)$ as an approximation of the geometric average $R(t_0,t_1)$.
\end{remark}

\subsection{The overnight forward curve}\label{sec:forward}

Assuming that, each day, a trader can invest at the overnight rate published the following morning,
the geometric SOFR average $R(t_0,t_1)$ can be thought of as the interest earned from investing $1/[(t_1-t_0)\delta]$ dollars in the overnight repo market over the period $(t_0,t_1)$. Assuming further that zero coupon bonds are available for maturities $t_0$ and $t_1$, the payout of $R(t_0,t_1)$ could then be replicated by a long position of $1/[(t_1-t_0)\delta]$  units in a zero-coupon bond (ZCB) maturing at $t_0$ and a short position of $1/[(t_1-t_0)\delta]$ units in a ZCB maturing at $t_1$. Thus, if the ZCBs with maturities $t_0$ and $t_1$ could be traded at prices $P(t_0)$ and $P(t_1)$, respectively, then the payment $R(t_0,t_1)$ of the floating leg of the futures contract at time $t_1$ could be replicated at cost
\[
\frac{P(t_0)-P(t_1)}{(t_1-t_0)\delta}.
\]
On the other hand, the payout of the fixed leg of a futures contract could be replicated by $F(t_0,t_1)$ units of the zero-coupon bond maturing at $t_1$. If the futures contract and the zero-coupon bonds could be traded without bid-ask spreads, then the unique arbitrage-free futures rate would satisfy
\[
\frac{P(t_0)-P(t_1)}{(t_1-t_0)\delta} = F(t_0,t_1)P(t_1)
\]
or, equivalently,
\begin{equation}\label{pf}
\frac{P(t_0)}{P(t_1)} = 1+F(t_0,t_1)(t_1-t_0)\delta.
\end{equation}

When $t_0=0$, equation \eqref{pf} becomes
\begin{equation}\label{zcbf}
\frac{1}{P(t_1)} = 1+F(0,t_1)t_1\delta
\end{equation}
so $F(0,t_1)$ is just the simply compounded zero-rate for maturity $t_1$. In particular, $F(0,1)$ is the SOFR rate observed on the morning of day $1$.
Here and in what follows, we make the assumption that, when making a decision to lend or borrow in the overnight market at time $t=0$, the investor observes the price $P(1)$ and thus, the SOFR $F(0,1)$ that applies to the night ahead. This may seem to contradict the SOFR publication schedule but it can be justified by the fact that investors participating in the overnight lending market have additional information about the available rates. The overnight lending market is, of course, illiquid in the sense that different rates apply to different lending/borrowing decisions but we take the SOFR rate published in the morning as a proxy for the mid-rate available the day before.

If we are given ZCB prices for maturities $t$ and $t+1$, we define the corresponding {\em overnight forward rate} $F(t)$ by
\[
\frac{P(t)}{P(t+1)} = e^{F(t)\delta}.
\]
If we have perfectly liquid zero-coupon bonds maturing every day up to day $t$, we can express their prices in terms of the overnight forward rates as
\begin{equation}\label{zcb}
P(t) = e^{-\sum_{s=0}^{t-1}F(s)\delta}.
\end{equation}
If the ZCB market was perfectly liquid without bid-ask spreads, this would give a one-to-one correspondence between the zero curve $t\mapsto P(t)$ and the forward curve $t\mapsto F(t)$.

Substituting \eqref{zcb} in \eqref{pf}, we find the following relation between SOFR futures rates and the overnight forward rates:
\begin{equation}\label{eq:FF}
1 + F(t_0,t_1)(t_1-t_0)\delta = \frac{P(t_0)}{P(t_1)} = \exp\left(\sum_{t=t_0}^{t_1-1}F(t)\delta\right).
\end{equation}
In particular, the current overnight forward rate $F(0)$ and the SOFR rate $F(0,1)$ published in the morning of day $1$ are related by
\begin{equation}\label{sofrf}
1+F(0,1)\delta = e^{F(0)\delta},
\end{equation}
which is just a restatement of \eqref{zcbf}. An overnight forward curve $t\mapsto F(t)$ is consistent with the rates $F(t_0,t_1)$ of perfectly liquid futures contracts if it satisfies the {\em linear} equation
\begin{equation}\label{eq:consistency}
\sum_{t=t_0}^{t_1-1}F(t)\delta = \ln(1 + F(t_0,t_1)(t_1-t_0)\delta)
\end{equation}
for every reference period $(t_0,t_1)$ for which futures quotes are available.

\section{Parameterization of the forward curve}\label{sec:par}

In practice, the term structure of interest rates is relevant only up to a certain cut-off date $\bar T$. For example, if one wishes to estimate the forward curve five years into the future, we would have $\bar T=5 \times 365$. The number of futures and swap quotes in practice is not enough to uniquely fix the whole forward curve $(F(t))_{t=0}^{\bar T}$ through condition \eqref{eq:consistency}. 
Following \cite[Section~9.3]{fil9}, we will approximate the overnight forward curve by an {\em affine model} of the form
\begin{equation}\label{parf}
F(t) = \sum_{k=1}^K\xi^k\phi^k(t)\quad t=1,2,\ldots
\end{equation}
where $\phi^k:\naturals\to\reals$ are given \enquote{basis functions} on the timeline and $\xi^k$ are parameters to be determined. The regularity of the basis functions (continuity, differentiability,~\ldots) determines the regularity of the resulting forward curve. Substituting \eqref{parf} into \eqref{eq:consistency} gives 
\begin{equation}\label{eq:consistency2}
\sum_{t=t_0}^{t_1-1}\sum_{k=1}^K\xi^k\phi^k(t)\delta = \ln(1 + F(t_0,t_1)(t_1-t_0)\delta).
\end{equation}
Any forward curve of the form \eqref{parf} has to satisfy this equation to be consistent with the futures quote $F(t_0,t_1)$. One should recall, however, that equations \eqref{eq:consistency} and \eqref{eq:consistency2} assume that neither the forward quotes nor the ZCB prices have bid-ask spreads.

If we have a collection $(F(t_0^j,t_1^j))_{j=1}^J$ of futures quotes, we will get a system of $J$ equations for the parameter vector $\xi=(\xi^k)_{k=1}^K$. If there are more quotes than parameters i.e., if $J>K$, the system is overdetermined in general. We will thus seek a parametric forward curve \eqref{parf} that satisfies the consistency condition \eqref{eq:consistency} for all the futures contracts as well as possible. If we measure the quality of the fit by the sum of squared errors, we arrive at the least squares problem
\[
\minimize\quad \sum_{j=1}^J \left|\sum_{t=t_0^j}^{t_1^j-1}\sum_{k=1}^K\xi^k\phi^k(t)\delta - \ln(1 + F(t_0^j,t_1^j)(t^j_1-t^j_0)\delta)\right|^2\quad\ovr\xi\in\reals^K.
\]
Note that equation \eqref{eq:consistency2} is linear in the parameters $\xi=(\xi^k)_{k=1}^K$ so the above least squares problem is quadratic in the parameters $\xi$. Besides futures quotes, one could use quotes on zero-coupon bonds in the calibration. Indeed, zero-coupon bonds imply futures rates $F(t_0,t_1)$ through \eqref{pf} so the corresponding consistency conditions would also be linear. Swap contracts, however, would result in nonlinear consistency conditions for the forward rates so their inclusion in the least squares estimation would result in a non-convex optimisation problem. Such problems are NP-hard, in general; see e.g.~\cite{nes18}. 

The calibration problem can be written in the vector format
\begin{equation}\label{lsmid}
\minimize\quad \|A\xi-b\|^2\quad\ovr\xi\in\reals^K,
\end{equation}
where $\|\cdot\|$ denotes the Euclidean norm,  $A\in\reals^{J\times K}$ is the matrix with entries
\[
A_{j,k} = \delta\sum_{t=t_0^j}^{t_1^j-1}\phi^k(t)
\]
and $b\in\reals^J$ is a vector with components
\[
b_j = \ln(1 + F(t_0^j,t_1^j)(t^j_1-t^j_0)\delta).
\]
If $A$ has linearly independent columns, the problem has a unique solution given by
\[
\xi = (A^TA)^{-1}A^Tb.
\]
If the least squares solution fails to be unique, it is natural to choose the one with the least Euclidean norm. The minimum norm solution always exists and it can be expressed as
\[
\xi=A^+b,
\]
where $A^+$ is the Moore-Penrose inverse of $A$.

In reality, the forward quotes always come with bid-ask spreads so the consistency equation \eqref{eq:consistency} should be replaced by
\begin{equation}\label{eq:consistency3}
b^b \le
\sum_{t=t_0}^{t_1-1}F(t)\delta \le b^a,
\end{equation}
where
\[
b^b:=\ln(1 + F^b(t_0,t_1)(t_1-t_0)\delta)\quad\text{and}\quad b^a:=\ln(1 + F^a(t_0,t_1)(t_1-t_0)\delta)
\]
and $F^b(t_0,t_1)$ and $F^a(t_0,t_1)$ are the bid- and ask-futures rates, respectively. Accordingly, condition \eqref{eq:consistency2} on the parametric forward curve should be replaced by
\begin{equation}\label{eq:consistency4}
b^b\le
\sum_{t=t_0}^{t_1-1}\sum_{k=1}^K\xi^k\phi^k(t)\delta \le b^a,
\end{equation}
In this case, the least squares problem can be written as
\begin{equation}\label{ls}
\begin{aligned}
& \minimize\quad & &\|e\|^2\quad\ovr \xi\in\reals^K,\ e\in\reals^J,\\
& \st\quad & & b^b\le A\xi+e\le b^a,
\end{aligned}
\end{equation}
where $b^b$ and $b^a$ are now vectors whose components are defined according to the available futures quotes. The components of the vector $e$ quantify the violation of the lower and upper bounds in \eqref{eq:consistency4} for each of the quoted forward contracts. 

Problem \eqref{ls} is that of convex quadratic optimization so it can be solved quickly by off-the-shelf optimization solvers. In this paper, we employ the interior point solver of Mosek~\cite{aps2019mosek}. The problem was formulated and communicated to Mosek using Python~\cite{van1995python} and CVXPY~\cite{diamond2016cvxpy}. The solution is obtained in less than 0.4 seconds using a Dell Latitude running Linux with 16 GB of RAM and a 4-core Intel i7 CPU at 3.00 GHz; see Example~\ref{ex:contpar}. Similarly, solving 1300 instances of problem~\eqref{lsmid} takes less than 8 minutes; see Section~\ref{sec:datatr}.

\begin{example}[Piecewise constant parametrization]\label{ex:cmepar}
Let $(T_k)_{k=1}^K$ be a sequence of increasing points in time and consider the parameterization \eqref{parf} with the piecewise constant basis functions
\[
\phi^k(t)=
\begin{cases}
    1 & t\in[T_{k-1},T_k],\\
    0 & t\notin[T_{k-1},T_k].
\end{cases}
\]
Such basis functions result in piecewise constant forward curves which are used e.g.\ by CME \cite{cme}.
Like CME, we use one-month and three-month futures quotes and SOFR to estimate the forward curve. However, instead of mid-quotes, we use the bid and ask quotes, and solve the least squares problem \eqref{ls} with tenors corresponding to FOMC meeting dates
\[
(T_k)_{k=1}^{K} = (0, 1.2m, 2.5m, 3.9m, 5.5m, 7m, 8.5m, 9.7m, 11.5m, 1y).
\] 
The quotes are illustrated in Figure~\ref{fig:termCMEstep} (without distinguishing between one- and three-month futures quotes).
\end{example}

\begin{figure}[!ht]
	\centering
	{ \includegraphics[trim = 0mm 0mm 0mm         0mm, clip, width=0.6\textwidth] 
        {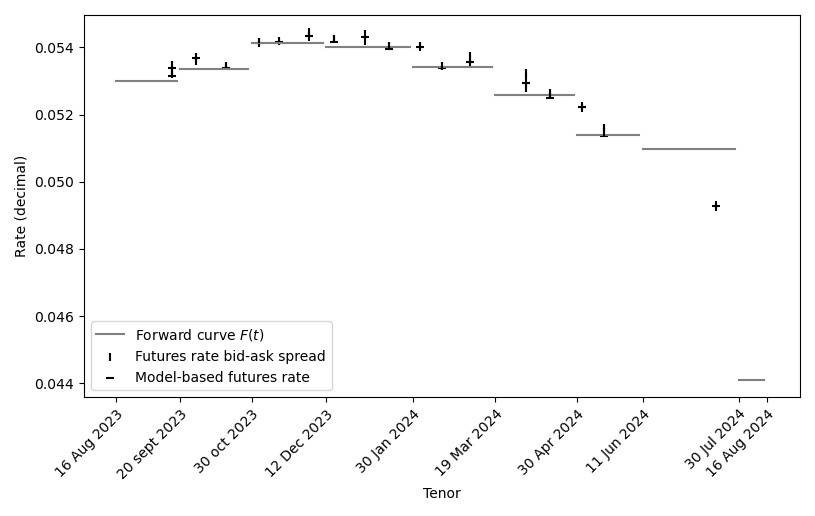}
	}
	\centering
	\caption
	{Estimated forward curve 12 months ahead on 16 August 2023 with a cut-off date of 1 year using piecewise constant basis functions from Example \ref{ex:cmepar}. The tenors correspond to the FOMC meeting dates after 16 August 2023. The model-based forward rates are computed from the forward curve 
    $F(t)$ using \eqref{eq:consistency}. 
    The futures rate quotes are computed using Remark~\ref{rem:imm} and are plotted in the middle of the reference quarter ({\em data source: Bloomberg Finance L.P.})
	}
	\label{fig:termCMEstep}
\end{figure}

\begin{example}[Continuous parameterization]\label{ex:contpar}
We will now use the parameterization \eqref{parf} with the continuous basis functions $\phi^k$ which are linear between consecutive tenors $T_k$ and satisfy
\begin{equation}\label{eq:piecewise}
\phi^k(t)=
\begin{cases}
    1 & t=T_k,\\
    0 & t\notin[T_{k-1},T_{k+1}].
\end{cases}
\end{equation}
This results in continuous piecewise-linear forward curves. Indeed, the corresponding forward curve given by \eqref{parf}, has $F(T_k)=\xi^k$ for all $k$ and the curve interpolates linearly between consecutive tenors. Continuity is often seen as a desirable property in term structure modelling; see e.g.\ \cite[Section~3.3]{fil9} and the references there. Having an economic interpretation for the parameters $\xi^k$, facilitates their statistical modelling; see Section~\ref{sec:tsm}.

We use the SOFR and available one- and three-month SOFR futures quotes with maturity up to 5 years and estimate a piecewise linear forward curve with tenors
\[
(T_k)_{k=1}^K = (0,1m,3m,6m,1y,2y,3y, 4y,5y).
\]
In particular, the first component of $\xi$ is the one-day maturity zero-rate $F(0)$ which is essentially the current SOFR; see \eqref{sofrf}. We fit the above parameterization to the observed futures quotes as well as to the value of SOFR observed the morning after. We do this by solving problem~\eqref{ls} with the additional constraint that $\xi^1=\ln(1+F(0,1)\delta)/\delta$; see equation \eqref{sofrf}. This results in the piecewise linear curve in Figure~\ref{fig:fwdcurve}. The figure also plots the employed futures bid- and ask-quotes obtained from Bloomberg Finance L.P. Because the SOFR rate is published on the next business day, we shift the horizontal position of the quotes 1 day backwards to match with the futures prices that are published on the trading day. 
\end{example}

\begin{figure}[!ht]
	\centering
	{\includegraphics[trim = 0mm 0mm 0mm 0mm, clip, width=0.6\textwidth] 
            {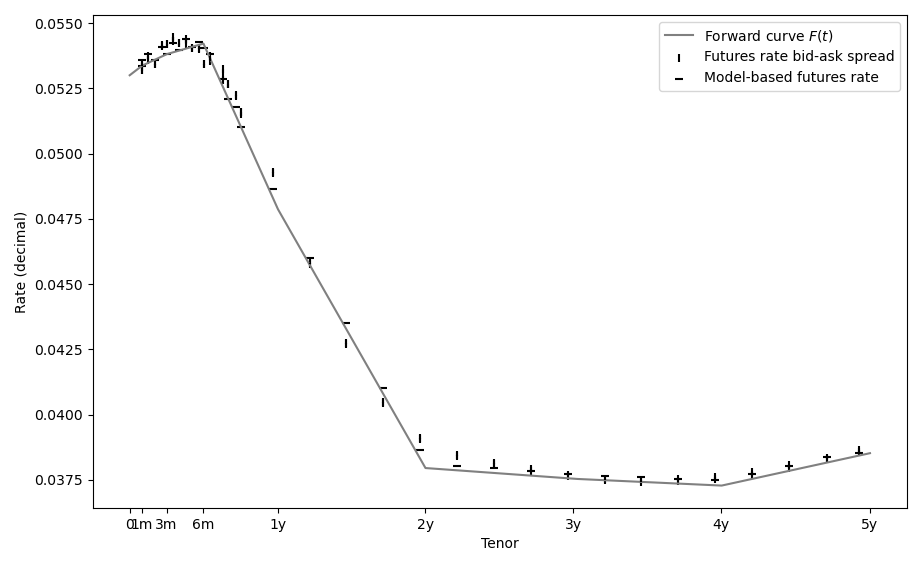}
	}
	\centering
	\caption
	{Estimated forward curve five years ahead on 16 August 2023 using continuous basis functions from Example \ref{ex:contpar}. The model-based forward rates are computed from the forward curve $F(t)$ using \eqref{eq:consistency}.  
    The rates over the first 12 months differ from those in Figure~\ref{fig:termCMEstep}
    because of the different tenors and basis functions. The futures rate quotes are computed using Remark~\ref{rem:imm} and are plotted in the middle of the reference quarter ({\em data source: Bloomberg Finance L.P.})
	}
	\label{fig:fwdcurve}
\end{figure}

\section{Statistical modelling of the forward curve}\label{sec:tsm}

Up to now, we have been studying the term structure at a single point in time. We will next focus on the dynamics of the term structure and denote the forward curve at time $t$ by $F_t:=(F_t(s))_{s=0}^T$. In order to describe the evolution of $F_t$, we will, much like in Chapter~9 of \cite{fil9}, first approximate it by the parameterization \eqref{parf} and then model the parameters as a multivariate stochastic process. More precisely, we will assume that $F_t$ has, for every day $t=0,1,2,\ldots$, the parametric representation
\begin{equation}\label{eq:fcpar}
F_t(s) = \sum_{k=1}^K\xi^k_t\phi^k(s-t)\quad s\ge t,
\end{equation}
where $\xi^k:=(\xi^k_t)_{t=0}^\infty$ are real-valued stochastic processes. We will denote by $\xi$ the $\reals^K$-valued stochastic process with components $\xi^k$. This can be seen as a parametric term structure model in discrete time much as e.g.\ in \cite[Section~9.3]{fil9}. 
While term-structure models are usually specified under a fixed risk-neutral measure, we will describe the behavior of the term structure under a \enquote{statistical} measure (also known as the \enquote{$P$-measure} or the \enquote{real-world measure}) which will be more suitable for portfolio optimization and indifference pricing of SOFR derivatives. 
We will follow the general procedure described in~\cite[Sections~3 and 4]{aap21} to develop a stochastic model for $\xi$. This allows us to calibrate the model to both historical data and user's views concerning the future development of the term structure.

\subsection{SOFR and policy rates}
\label{sec:sofrpr}

SOFR is calculated daily as a volume-weighted median rate of overnight repurchase transactions (repo) of depositary institutions and the FED; see \cite{sifmarepo}.
The FED offers a deposit rate, known as overnight Reverse Repurchase Agreement Facility (ON RRP) since September 2014, and a lending rate, known as Standing Overnight Repurchase Agreement Facility (SRF) since July 2021; see Table~\ref{table:fedrates}. 
Most participants in the repo market have access to these Fed facilities (see \cite{RRepoCP, SRepoCP, secrepo}) so they are unlikely to lend to a counterparty at a rate lower than ON RRP.
Similar to the federal funds target range $[L, H]$, SRF and ON RRP are set during Federal Open Market Committee (FOMC) meetings and remain constant between meetings. Historically, ON RRP was set above $L$, except for a brief period between the end of 2019 and the beginning of 2020; see Figure~\ref{fig:effr}. 
When enough liquidity exists in the FED reverse repo facility, one would expect SOFR to stay above $L$ as well. The rare cases where SOFR falls below $L$ may be explained by the demand for ON RRP exceeding what the FED is offering, or some participants, which do not have access to FED facilities, accepting deposit rates lower than ON RRP. The converse applies to the federal funds target range upper limit $H$. Table~\ref{table:fedrates} illustrates the situation on 4 January 2024.

When the range $[L, H]$ is changed, SOFR will follow. This explains the jumps in SOFR; see Figure~\ref{fig:effr}. 
In addition, SOFR was prone to spikes at regulatory reporting dates until July 2021 when SRF was established.

\begin{table}[ht]
	\centering
	\caption
		{
		\label{table:fedrates}	
		A glimpse of FED monetary policy measures as of 4 January 2024. 
        SOFR was $5.32\%$
		\emph{(source: FED \cite{FederalReserveWebsite}.)}
		}
                    \vskip 3mm
				\resizebox{11cm}{!}
				{
                        \rowcolors{2}{}{mygray}
					\begin{tabular}{l l l}

						\toprule
						\textbf{Rate}
						& \textbf{Value} 
						& \textbf{Funds available}
						\\

						\midrule
                            Federal funds target range $[L, H]$
                            & $[5.25\%, 5.50\%]$
                            & 
                            \\
                            SRF minimum bid rate
                            & $5.50\%$
                            & \$500 billion of aggregate operation per counterparty per day
                            \\

                            ON RRP offering rate
                            & $5.30\%$
                            &\$160 billion of aggregate operation per counterparty per day
                            \\

						\bottomrule
						
					\end{tabular}
				}
                    \vspace{2mm}

				\vskip 3mm
\end{table}

\begin{figure}[!ht] 
	\centering
	{\includegraphics[trim = 0mm 0mm 0mm 0mm, clip, width=0.6\textwidth] 
            {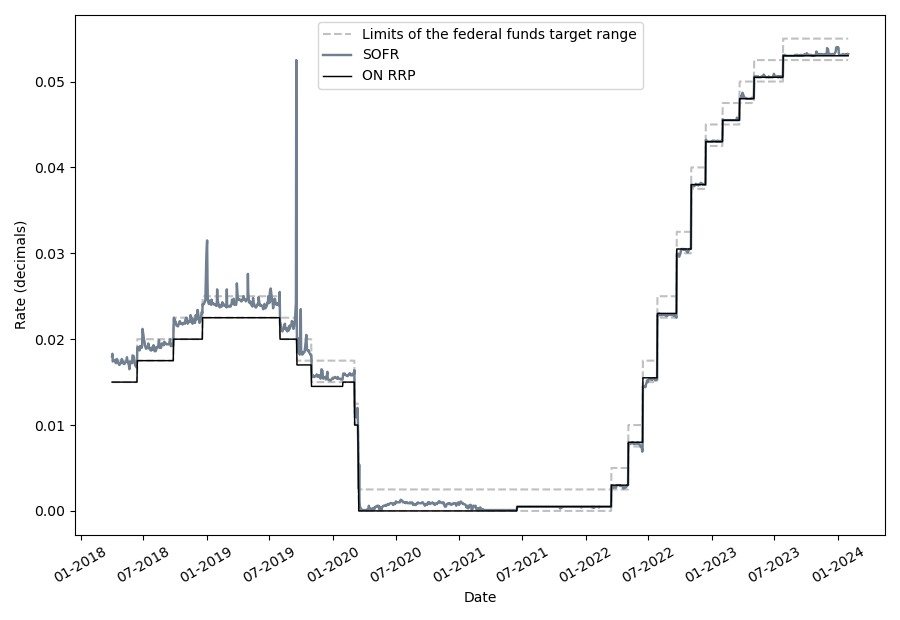}
	}
	\centering
	\caption
	{Daily values of SOFR, ON RRP, and the limits of federal funds target range $[L, H]$ ({\em source: FRED} \cite{SOFR, DFEDTARU, DFEDTARL, RRPONTSYAWARD}).
	}
	\label{fig:effr}
\end{figure}

\subsection{The data and the transformations} \label{sec:datatr}

We will describe the SOFR term structure using the piecewise linear basis functions $\phi^k$ from Example~\ref{ex:contpar} and find historical daily values for the parameters $\xi$ by taking daily SOFR and closing prices for three-month futures and solving problem \eqref{lsmid} for each day separately.  
We will use the tenors from Example~\ref{ex:contpar} but other choices would be equally easy to use. 
Daily closing prices spanning these tenors are available from Bloomberg Finance L.P.\ from 1 April 2020 onwards.
Note that the large spikes in SOFR disappeared after July 2021 when SRF was established; see Figure~\ref{fig:effr}. 
Recall that our choice of the basis functions $\phi^k$ implies that $\xi^k$ is the value of the forward curve at maturity $T_k$. Figure~\ref{fig:xi} plots the obtained time series of daily values of $\xi^k$ starting from 1 April 2021. 

The components of $\xi$ appear nonstationary over the examined period. This is confirmed by the results of the augmented Dickey–Fuller test reported in Table~\ref{table:ADF_xi}. As explained in Section~\ref{sec:sofrpr} above, SOFR tends to follow the federal funds target range $[L, H]$, which has increased over the studied period. The first coefficient $\xi^1$ coincides with the value $F_t(0)$ of the forward curve at maturity of one day, which is related to SOFR through formula \eqref{sofrf} which is very close to an identity mapping for any reasonable value of the SOFR rate. As a result, $\xi^1$ tends to have only small variations between consecutive FOMC meeting dates but it jumps whenever the target is changed. To capture this feature of the market, we will model the variable
\begin{equation}\label{eq:x_1}
x_t^1 := \ln (\xi^1_t + c^1 - L_t),
\end{equation}
as a stationary process. Here $c^1$ is a small positive constant which accounts for the possibility of SOFR falling slightly below the target range.

The remaining parameters $\xi^k$ for $k=2,\ldots,K$  are also points on the overnight forward curve, so they tend to move in tandem with the SOFR. For nearer maturities the link tends to be stronger. This suggests modelling the parameters $\xi^k$, $k=1,\ldots,K$ through the differences
\begin{equation}\label{eq:x_k}
x_t^k := \ln(\xi^k_t+c^k)-\ln(\xi^{k-1}_t+c^{k-1}),
\end{equation}
where again, $c^k$ are small positive constants that allow for slightly negative values of the forward rates $F_t(T_k)=\xi^k$. The values of all the constants $c^k$ are chosen to make all the series $x^k$ as stationary and symmetric as possible. The values are given in Table~\ref{table:xishifts}. Since  $\xi^k_t=F_t(T_k)$, the variables $x^k_t$ for $k=2,\ldots,K$ can be thought of as the \enquote{term spreads} over the \enquote{tenors} $[t+T_{k-1},t+T_k]$.

\begin{table}[ht]
	\centering
	\caption
	{
		\label{table:xishifts}
		The shifts $c^k$ of $\xi^k$. The values have been multiplied by $100$ for readability.
	}
	\resizebox{8cm}{!}
	{
            \begin{tabular}{SSSSSSSSSS}
			\toprule
			{$c^1$}
			& {$c^2$}
			& {$c^3$}
			& {$c^4$}
			& {$c^5$} 
			& {$c^6$}
			& {$c^7$}
			& {$c^8$} 
			& {$c^9$}  
			\\	
                \midrule
			0.81
                & 0.965
                & 0.954
                & 0.9
                & 0.67
                & 0.44
                & 0.15
                & 0.042
                & - 0.02
			\\
			\bottomrule
		\end{tabular}
	}
\end{table}

Figures~\ref{fig:x_1} and~\ref{fig:x_k} plot the historical daily values of $x=(x^1,\ldots,x^K)$.
Compared to $\xi^k$, the time series $x$ appears more stationary. This is confirmed by the results of the augmented Dickey-Fuller test in Table~\ref{table:ADF_x} where the values of the test statistic are further from zero as compared to the values in Table~\ref{table:ADF_xi}. Most components of $x$ still fail the stationarity test at the usual 5\%-confidence level but given the relatively short observation period, one should not draw strong conclusions from that. Indeed, it is economically reasonable to assume that all components of $x$ are stationary in the long run. Stationarity of the first component $x^1$, in particular, means that SOFR tends to follow the federal funds target range, an essential feature of the market. Stationarity of the remaining components of $x$ means that the forward rates tend to follow the SOFR.

\begin{table}[ht]
	\centering
        \caption{\label{table:ADF_xi}
	Augmented Dickey–Fuller test for the term structure risk factors (p-values are in parenthesis).
	}
	\resizebox{11cm}{!}
	{
            \begin{tabular}{lSSSSSSSSSS}
			\toprule
			& {$\xi^1$}
			& {$\xi^2$}
			& {$\xi^3$}
			& {$\xi^4$}
			& {$\xi^5$} 
			& {$\xi^6$}
			& {$\xi^7$}
			& {$\xi^8$} 
			& {$\xi^9$}  
			\\
			
                \midrule
			{ADF t-statistic}
                & 0.673
                & 0.612
                & 0.100
                & -0.404
                & -0.924
                & -1.070
                & -1.074
                & -1.168
                & -1.207
			\\

                &(0.989)
                &(0.988)
                &(0.966)
                &(0.909)
                &(0.780)
                &(0.727)
                &(0.725)
                &(0.687)
                &(0.671)
			\\

			\bottomrule
		\end{tabular}
	}
	
\end{table}

\begin{table}[ht]
	\centering
	\caption
	{
		\label{table:ADF_x}
		Augmented Dickey–Fuller test for the transformed term structure risk factors (p-values are in parenthesis).
	}
	
	\resizebox{11cm}{!}
	{
            \begin{tabular}{lSSSSSSSSSS}
			\toprule
			& {$x^1$}
			& {$x^2$}
			& {$x^3$}
			& {$x^4$}
			& {$x^5$} 
			& {$x^6$}
			& {$x^7$}
			& {$x^8$} 
			& {$x^9$}  
			\\
			
                \midrule
			{ADF t-statistic}
                & -2.599
                & -2.938
                & -1.632
                & -2.343
                & -1.783
                & -1.221
                & -1.745
                & -1.857
                & -2.903
			\\

                &(0.093)
                &(0.041)
                &(0.466)
                &(0.159)
                &(0.389)
                &(0.664)
                &(0.408)
                &(0.353)
                &(0.045)
			\\

			\bottomrule
		\end{tabular}
	}
	
\end{table}

\begin{table}[ht]
	\centering
	\caption
	{
		\label{table:ADF_y}
		Augmented Dickey–Fuller test for $y$ (p-values are in parenthesis).
	}
	
	\resizebox{5cm}{!}
	{
            \begin{tabular}{lSSSSSSSSSS}
			\toprule
			& {$\ln (L + c^L)$}
			& {$I$}
			& {$G$} 
			\\
			
                \midrule
			{ADF t-statistic}
                & -2.288 
                & -2.901
                & -5.351
			\\

                &(0.176)
                &(0.045)
                &(0.000)
			\\

			\bottomrule
		\end{tabular}
	}
	
\end{table}

\begin{figure}[!ht]
	\centering
	{
		\includegraphics[trim = 0mm 0mm 0mm 0mm, clip, width=0.6\textwidth] 
		{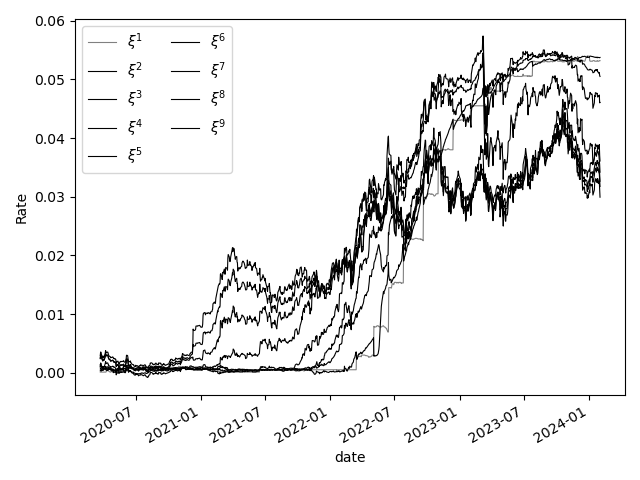}
	}
	\centering
	\caption
	{Historical daily values of the term structure risk factors $\xi$, the components of which are values of the overnight forward curve at $(0,1m,3m,6m,1y,2y,3y, 4y,5y)$; see Example~\ref{ex:contpar}. ({\em data source: Bloomberg Finance L.P.})
	}
	\label{fig:xi}
\end{figure}

\begin{figure}[!ht]
	\centering        
	{\includegraphics[trim = 0mm 0mm 0mm 0mm, clip, width=0.6\textwidth] 
		{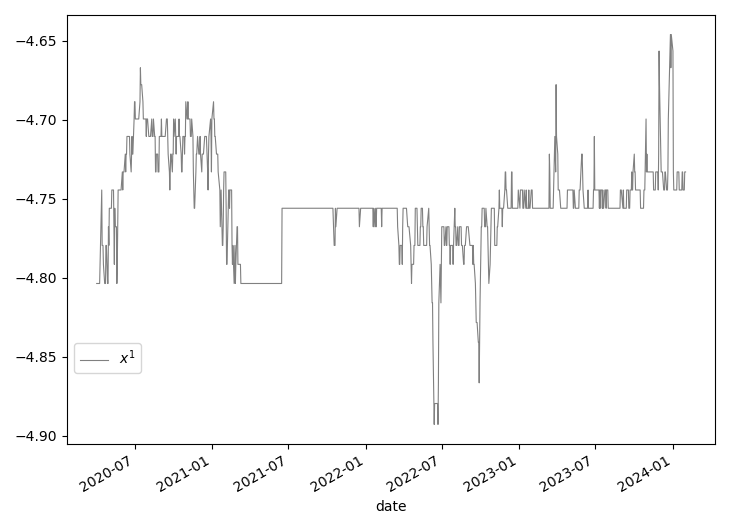}
	}
	\centering
	\caption
	{Historical values of $x^1:=\ln(\xi^1-L)$. Here, $\xi^1$ is the spot rate $F(0)$ and $L$ is the lower limit of the FED target range. The SOFR moves with multiples of 1 basis point, which causes the discrete jumps in the time series.
	}
	\label{fig:x_1}
\end{figure}

\begin{figure}[!ht]
	\centering      
	{\includegraphics[trim = 0mm 0mm 0mm 0mm, clip, width=0.6\textwidth] 
		{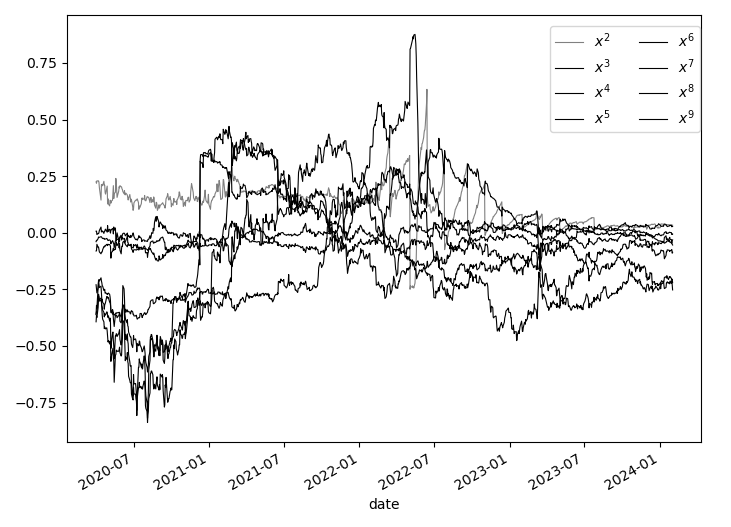}
        }
	\centering
	\caption
	{Historical values of $x^k$, $k=2, \cdots, 9$.
	}
	\label{fig:x_k}
\end{figure}

A complete description of the forward curve dynamics requires a model for the lower limit $L$ of the federal funds target range. There are several macroeconomic factors that may affect the FOMC rate decisions, but the most significant ones are the inflation $I$ and the real GDP growth rate $G$; see e.g.~\cite{mishkin2007economics}. This is in line also with the Taylor rule \cite{taylor1993discretion}, which models the central bank policy rate as a simple parametric function of $G$ and $I$.

We will model the values of $L$, $I$, and $G$ at FOMC meeting dates. Instead of modelling the values of $L$ directly, we will model the variable $y^1:=\ln(L+c^L)$ as a stationary stochastic process. Here, $c^L$ is a small positive constant which allows for the possibility of strictly negative target rates in the future. Such scenarios seem plausible given what happened in the Eurozone after the 2008 financial crisis. We choose the value $c^L = 0.5\%$ which makes $y^1$ fairly stationary and symmetrically distributed. We will develop a stationary stochastic model for the three-dimensional process
\begin{equation}\label{eq:macrotransform}
y_t = [\ln(L_t+c^L),I_t,G_t].
\end{equation}
Figure~\ref{fig:macrots} plots the monthly values of $y$ while Table~\ref{table:ADF_y} reports the results of the augmented Dickey-Fuller test for the components of $y$. Except for $y^1:=\ln (L+c^L)$, the time series appears stationarity at $5\%$ confidence level. The slight nonstationarity of $y^1$ reflects the changes in monetary policy since the beginning of the time series in 1972. Economically, it is reasonable to assume that the decline of interest rates will not continue indefinitely so that stationarity of $\ln (L + c^L)$ in the future seems like a reasonable assumption.

\begin{figure}[!ht]
	\centering
	{\includegraphics[trim = 0mm 0mm 0mm 0mm, clip, width=0.6\textwidth] 
		{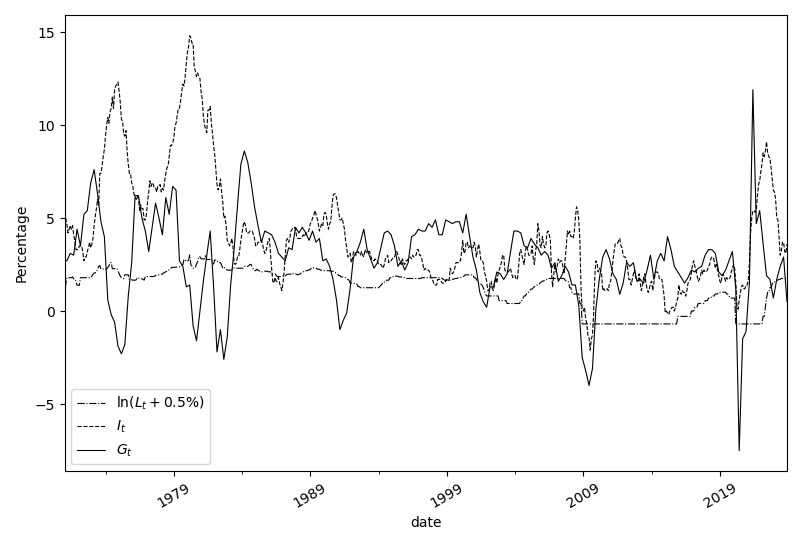}
	}
	\centering
	\caption
	{
		Historical values of the macroeconomic risk factors $\ln (L_t + 0.5 \%)$, $I_t$, and $G_t$ ({\em sources: FRED} \cite{DFEDTARL} and Bloomberg Finance L.P.)
	}
	\label{fig:macrots}
\end{figure}

\subsection{The macroeconomic model}\label{sec:macro}

We model the vector $y := [\ln(L+c^L),I,G]$ by the linear stochastic difference equation (vector autoregressive model)
\begin{equation}\label{eq:var}
\Delta y_t = Ay_{t-1} + a_t + \epsilon_t,
\end{equation}
where $A\in\reals^{3\times 3}$ and $a_t\in\reals^3$ are given parameters and $\epsilon_t$ are iid Gaussian vectors in $\reals^3$ with zero mean and covariance matrix $\Sigma$. The above model specification could be generalized in many ways but, as we will see below, the simple specification above already results in a fairly reasonable description of the macroeconomic variables.

The matrix $A$ as well as the distribution of the innovations $\epsilon_t$ will be estimated from historical data. We use monthly observations of $y$ from January 1972 to January 2024. The monthly frequency can be thought of as an approximation of the FOMC meeting frequency. The data on $G_t$ comes at a quarterly frequency, so we interpolate it to get monthly values. 

We estimate the matrix $A$ using ordinary least squares and set insignificant coefficients to zero. This results in the matrix given in Table~\ref{table:amatrixy}. The negative diagonal elements indicate a mean-reverting behavior of the components of $y$. 
The positive signs of $A_{12}$ and $A_{13}$ suggest that the FED tends to raise the target rate $L$ when the inflation $I$ and/or the real GDP growth rate $G$ increase. This is in line with general monetary policy and, in particular, the Taylor rule \cite{taylor1993discretion}.
The negative sign of $A_{21}$ reflects the fact that higher interest rates tend to slow down inflation and vice versa. The positive sign of $A_{23}$ means that high real GDP growth tends to increase inflation and vice versa. The negative sign of $A_{32}$ just means that higher inflation tends to slow down real GDP growth.

Table~\ref{table:eigen_y} lists the eigenvalues of the matrix $(A+I)$. They all lie within the unit circle in the complex plane. This implies that the process $y$ is stationary. The imaginary values imply oscillations in the time series, which correspond to economic cycles. We use the regression residuals to estimate the distribution of the innovations $\epsilon_t$. For simplicity, we use a multivariate Gaussian distribution. The correlations and the variances are given in Table~\ref{table:corry}. 

The above model is close to the macroeconomic model of \cite[Section 4.3]{ang2003no}, which described the influence of inflation and GDP growth on the interest rate. The model of \cite{ang2003no} did not, however, include the feedback effect from interest rates to the macroeconomy. This effect is an essential feature of the macroeconomy and central banking.

\begin{table}[ht]
	\centering
	\caption
	{
		\label{table:amatrixy}
		Coefficients of the autoregression matrix A for $y$ and their 
		corresponding p-values (in parenthesis). 
        Zero elements correspond to p-values $> 10\%$; they were omitted for clarity.
	}
	
	\resizebox{5cm}{!}
	{                
                \tiny
			\sisetup{table-number-alignment=center}
                \begin{tabular}{lSSS}
				
				\toprule
				& {$\ln L$}
				& {$I$}
				& {$G$}
		
				\\
				
				\midrule
				\rowcolor{mygray}
				  & -0.018  
                    & 0.006 
                    & 0.008
                                       
				\\
				
                    \rowcolor{mygray}
                    {\multirow{-2}{*}{$\ln L $}}
                    & (0.000)
                    & (0.001)  
                    & (0.000)
                    
				\\				
                    & -0.037 
				&  
				& 0.037
				\\
				
				{\multirow{-2}{*}{$I$}}
                    & (0.053)
				& (0.000)
				\\
				
				\rowcolor{mygray}
                    &
				& -0.028
				& -0.032
				\\
				
				\rowcolor{mygray}
                    {\multirow{-2}{*}{$G$}}
                    &
				& (0.001)
				& (0.001)
				\\
				
				\bottomrule
			\end{tabular}
		}
		
\end{table}

\begin{table}[ht]
	\centering
	\caption
	{
		\label{table:eigen_y}
		The eigenvalues of the matrix $(A+I)$ in the time series model for the macroeconomic risk factors $y$. The imaginary unit is denoted by $i$.
	}
	
	\resizebox{4.5cm}{!}
	{
                \begin{tabular}{SSS}
			\toprule
                \multicolumn{3}{c}{Eigenvalues of $(A+I)$}
			\\
			
                \midrule			
                 0.982
                & {0.984+0.0279i}
                & {0.984-0.0279i}
			\\
   			
			\bottomrule
		\end{tabular}
	}

\end{table}
	
\begin{table}[ht]
	\centering
	\caption
		{
			\label{table:corry}
	Correlations and variances of the residuals of the time series model for the macroeconomic risk factors $y$
		}
		
		\resizebox{5cm}{!}
		{
                
                \tiny
			\sisetup{table-number-alignment=center}
                \begin{tabular}{lSSS}
				\toprule
                    
				& {$\ln L$}
				& {$I$}
				& {$G$}
				\\
				
				\midrule
				\rowcolor{mygray}
				{$\ln L$}
                    & 1.000 
				& 0.013
				& 0.099
    
		          \\
				
				{$I$}
				& & 1.000
                    & 0.170
				\\
				
				\rowcolor{mygray}
				{$G$}
				& & & 1.000
                    \\	
                    
				\midrule
				{Variance}
                    & 0.011
                    & 0.166
                    & 0.271
 
				\\
				\bottomrule
			\end{tabular}
		}
		
\end{table}

\subsection{The forward curve model}\label{sec:termstruc}

We model the daily values of the 9-dimensional process $x=(x_t)_{t=0}^\infty$ defined in \eqref{eq:x_1} and \eqref{eq:x_k} by another vector autoregressive model of the form 
\begin{equation}\label{eq:varx}
\Delta x_t = Ax_{t-1} + a_t + \epsilon_t,
\end{equation}
where $A\in\reals^{9\times 9}$ and $a_t\in\reals^9$ are given parameters and $\epsilon_t$ are iid Gaussian vectors in $\reals^9$ with zero mean and covariance matrix $\Sigma$. The vector autoregressive model is fitted to daily data constructed from the historical values of the target rate $L$ and the daily closing prices of futures contracts as described in Section~\ref{sec:datatr}.
The parameter matrix and the distribution of the innovations are estimated as described in  Section~\ref{sec:macro} in the context of the macroeconomic variables. However, to simplify the statistical analysis and the modelling, we restrict the autoregressive matrix to be diagonal. One could, of course, allow for general AR-matrices but the interpretation of the off-diagonal coefficients would not be quite straightforward. The diagonal elements, on the other hand, correspond to mean reversion which is a natural feature for the shape of the modelled forward curve. The estimated coefficients are given in Table~\ref{table:amatrixx}. 
All diagonal values are negative which indicates mean reversion and stationarity in the components of $x$. 
Table \ref{table:corrx} reports the correlations and the variances of the residual term.

\begin{table}[ht]
	\centering
	\caption
	{
		\label{table:amatrixx}
		Coefficients of the autoregression matrix A for $x$ and their corresponding p-values (in parenthesis).
	}
	
	\resizebox{9cm}{!}
	{  \sisetup{table-number-alignment=center}
		\begin{tabular}{S[table-format=-0.1]*{10}S}

			\toprule
			& {$x^1$}
			& {$x^2$}
			& {$x^3$}
			& {$x^4$} 
			& {$x^5$}
			& {$x^6$}
			& {$x^7$}
                & {$x^8$}
                & {$x^9$}
			\\

			\midrule
			\rowcolor{mygray}
			& -0.059
                & & & & & & & &
			\\
			
			\rowcolor{mygray}
                {\multirow{-2}{*}{$x^1$}}
			& (0.000)	
                & & & & & & & &
			\\
			
			
                & & -0.077
			\\
			
			{\multirow{-2}{*}{$x^2$}}
			& &(0.000)
			\\
			
			\rowcolor{mygray}
			& & & -0.006
                & & & & & &
			\\
			
			\rowcolor{mygray}
                {\multirow{-2}{*}{$x^3$}}
			& & & (0.105)
                & & & & & &
			\\		
			
			
			& & & & -0.011
			\\
			
			{\multirow{-2}{*}{$x^4$}}
			& & & & (0.019)	
			\\	
			
			\rowcolor{mygray}
			& & & & & -0.003
                & & & &
			\\
			
			\rowcolor{mygray}
                {\multirow{-2}{*}{$x^5$}}
			& & & & & (0.218)
                & & & & 
			\\	
			
			
			& & & & & & -0.004
			\\
			
			{\multirow{-2}{*}{$x^6$}}
			& & & & & & (0.148)	
			\\	
			
			\rowcolor{mygray}
			& & & & & & & -0.004
                & &
			\\
			
			\rowcolor{mygray}
			{\multirow{-2}{*}{$x^7$}}
                & & & & & & & (0.160)
                & &
			\\	
			
			
			& & & & & & & & -0.005
			\\
			
			{\multirow{-2}{*}{$x^8$}}
			& & & & & & & & (0.084)
			\\

			\rowcolor{mygray}
			& & & & & & & & & -0.019
			\\
			
			\rowcolor{mygray}
                {\multirow{-2}{*}{$x^9$}}
			& & & & & & & & & (0.002)	
			\\
			
			\bottomrule
		\end{tabular}
	}
	
\end{table}

\begin{table}[ht]
	\centering
	\caption
	{
		\label{table:corrx}
		Correlation coefficients for the residuals of $x$. Variances are in the bottom row. 
	}
	
	\resizebox{9cm}{!}
	{
            \rowcolors{2}{}{mygray}
            \begin{tabular}{lSSSSSSSSSS}
			\toprule
			& {$x^1$}
			& {$x^2$}
			& {$x^3$}
			& {$x^4$}
			& {$x^5$} 
			& {$x^6$}
			& {$x^7$}
			& {$x^8$} 
			& {$x^9$}  
			\\
			
			\midrule			
			{$x^1$}
			& 1.000
                & -0.209  
                & 0.144 
                & -0.123  
                & 0.037 
                & -0.032 
                &-0.023  
                &0.002  
                & 0.004
			\\
   
			{$x^2$}
			& & 1.000
			& -0.683  
                & 0.277 
                & -0.043  
                & 0.041  
                & 0.034 
                & -0.002 
                & -0.032
			\\
               
			{$x^3$}
			& & & 1.000
			& -0.303  
                & 0.138 
                & -0.086 
                & -0.052 
                & -0.013
                & -0.035
			\\
			
			{$x^4$}
			& & & & 1.000
			& 0.078 
                & -0.047
                & 0.010
                & -0.003 
                & -0.066
			\\		
			
			{$x^5$}
			& & & & & 1.000
			& 0.164 
                & -0.092
                & 0.000
                & -0.136
			\\
   
			
			{$x^6$}
			& & & & & & 1.000
			& 0.428  
                & 0.052
                & -0.078
			\\
			
			
			{$x^7$}
			& & & & & & & 1.000
			& 0.434 
			& -0.083
			\\
			
			{$x^8$}
			& & & & & & & & 1.000
			& 0.368
			\\
			
			{$x^9$}
			& & & & & & & & & 1.000
			\\
 
                \midrule
			\rowcolor{white}
			{\footnotesize Variance}
			& {\multirow{2}{*}               {\tablenum{0.011}}} 
                & {\multirow{2}{*}{\tablenum{0.124}}} 
                & {\multirow{2}{*}{\tablenum{0.036}}} 
                & {\multirow{2}{*}{\tablenum{0.016}}} 
                & {\multirow{2}{*}{\tablenum{0.018}}}
                & {\multirow{2}{*}{\tablenum{0.044}}}
                & {\multirow{2}{*}{\tablenum{0.042}}}
                & {\multirow{2}{*}{\tablenum{0.047}}}
                & {\multirow{2}{*}{\tablenum{0.030}}}
			\\

                \rowcolor{white}
                {\footnotesize ($\times 10^{-2}$)}
                \\
   			
			\bottomrule
		\end{tabular}
	}
	
\end{table}

\subsection{Incorporation of user's views}\label{sec:userviews}

In many applications of stochastic forward curve models, it is important to calibrate a model to market prices of derivatives or other information that may not be contained in the historical time series used in the estimation of the model parameters. This is particularly important in traditional risk-neutral models that are supposed to describe the future development of derivatives prices. Our aim is to develop a stochastic model (under the $P$-measure) that describes an agent's views concerning the future development of the involved risk factors. 

We apply the calibration procedure from Section~4.2 of \cite{aap21}, which allows us to incorporate both short-term forecasts as well as long-term views in both the macroeconomic model as well as the forward curve model described in Sections~\ref{sec:macro} and \ref{sec:termstruc} above. The calibration is achieved by choosing the sequence of vectors $a_t$ in \eqref{eq:var} so that the values of $x_t$ have desired median values at given points in time. Since both the macro model and the forward curve model are stationary, the model calibration according to Section~4.2 of \cite{aap21} is particularly easy. 

For this study, we choose the asymptotic median values for the macro model and the forward curve model according to the values given in Tables~\ref{table:ltmmacro} and~\ref{table:ltmxi}, respectively. In particular, the asymptotic median of inflation is set to $2\%$, which corresponds to the current central bank target. We set the asymptotic median of the FED target $L$ at $2.5\%$. This corresponds to the asymptotic median of $2.6\%$ for the transformed risk factor $\xi^1$.
Other values of the asymptotic medians are chosen simply as the historical median values of the corresponding time series. In addition to the asymptotic medians, we specify the short-term median values of the logarithmic target rate $\ln L$ so that the SOFR median will follow the last observed forward curve up to 5 years before converging towards the long-term median of $2.5\%$. It follows that the last observed forward curve sets the level of the simulated forward curve, whereas the historical median sets its shape. We emphasize that these choices are made only for purposes of illustration. It would be equally easy to use other specifications of the medians.

\begin{table}[ht]
	\centering
	\caption
	{
	\label{table:ltmmacro}
    Long-term median for the macroeconomic risk factors $y$. The $2.5\%$ for $L$ is user-specified. 
    The $2.00\%$ for $I$ corresponds to the inflation target set by the FOMC. The long-term median for $G$ is the median computed from historical data.}
	\resizebox{5cm}{!}
	{
            \tiny
            \begin{tabular}{lSSS}
			\toprule
                & {$L$}
			& {$I$}
			& {$G$}  
			\\			
                \midrule
			{$c$}
                & 2.50 \%
                & 2.00 \%
                & 2.85 \%
			\\
			\bottomrule
		\end{tabular}
	}
\end{table}

\begin{table}[ht]
	\centering
	\caption
	{
		\label{table:ltmxi}
	Long-term medians for the term structure risk factors $\xi$. The median value of 0.026 for $\xi^1$ equals $\exp(m(x^1))+m(L)-c^1$, where $m(L)=0.025$ is the median value of $L$ from Table~\ref{table:ltmmacro} and $m(x^1)$ is the historical median of $x^1$. The remaining values were set equal to historical medians.
	}
	
	\resizebox{9cm}{!}
	{
            \begin{tabular}{lSSSSSSSSSS}
			\toprule
			& {$\xi^1$}
			& {$\xi^2$}
			& {$\xi^3$}
			& {$\xi^4$}
			& {$\xi^5$} 
			& {$\xi^6$}
			& {$\xi^7$}
			& {$\xi^8$} 
			& {$\xi^9$}  
			\\
			
                \midrule
               {$c$}
               & 0.026
               & 0.017
               & 0.016
               & 0.015
               & 0.014
               & 0.014
               & 0.015
               & 0.016
               & 0.012
			\\
			\bottomrule
		\end{tabular}
	}
\end{table}

\section{A simulation study}\label{sec:simstudy}

This section illustrates the behaviour of the model built in Section~\ref{sec:tsm} by using numerical simulations. The linear structure of the underlying stochastic processes described in Sections~\ref{sec:macro} and~\ref{sec:termstruc} makes it easy to simulate the model over large numbers of scenarios on a daily frequency. We illustrate the simulations by numerically computing the development of the medians and confidence bands of the underlying risk factors. We will then use the model to study the payouts of interest rate derivatives in Section~\ref{sec:derpo}. Section~\ref{sec:ip} will compute indifference prices of the derivatives.

\subsection{Simulation of the macroeconomy and the forward curve}\label{sec:simexp}

We start by simulating 20K scenarios of the macroeconomic risk factors $y$ and the term structure risk factors $x$ over 10 years into the future. Along each scenario, we apply the inverse of the transformations in Section~\ref{sec:datatr} to obtain the values of the forward curve parameter~$\xi$. 

Figure~\ref{fig:simmacro} plots the medians, 95\% confidence bands, and an individual scenario of the macroeconomic risk factors. As expected, the median of $L$ converges to $2\%$ as specified by the constants in Table~\ref{table:ltmmacro}.
Figure~\ref{fig:simxi} presents analogous plots for the term structure risk factors $\xi$. Again, the medians converge to the values specified in Table~\ref{table:ltmxi}. Also, as specified in Section~\ref{sec:userviews}, the median of $\xi^1$ in Figure~\ref{fig:sim_sofr} tracks the forward curve observed at the start date of the simulation. One can also observe monthly jumps that correspond to changes in the FED policy rates. 
  
\begin{figure}[!ht]
    \centering
    { 
    \subfloat
    {
    \includegraphics[trim = 0mm 0mm 0mm 0mm, clip, width=0.45\textwidth] 
    {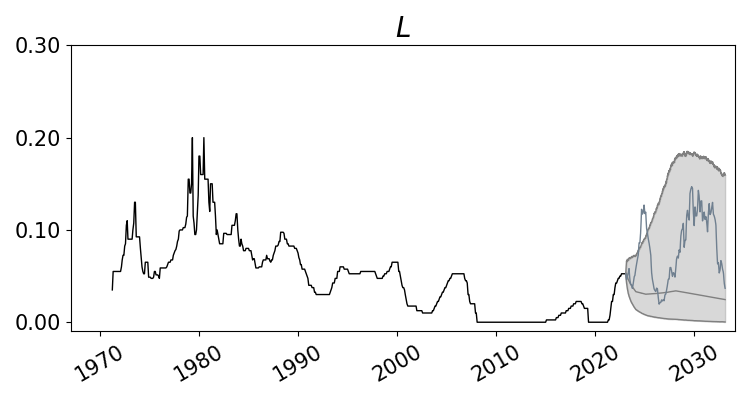}
    }
    \\[-1ex]
    \subfloat
    {
    \includegraphics[trim = 0mm 0mm 0mm 0mm, clip, width=0.45\textwidth] 
    {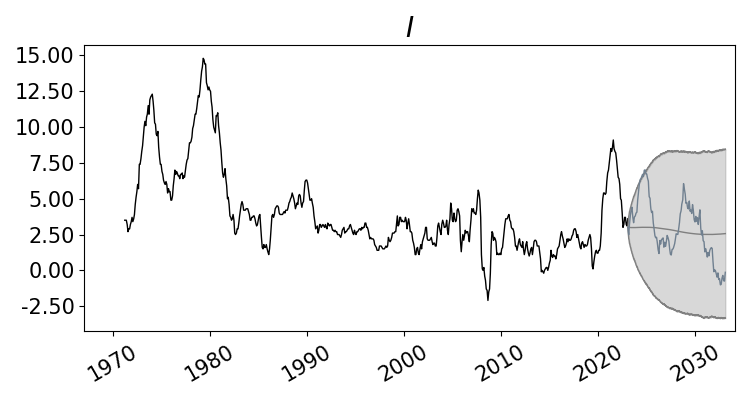}
    }
    \hfill
    \subfloat
    {
    \includegraphics[trim = 0mm 0mm 0mm 0mm, clip, width=0.45\textwidth] 
    {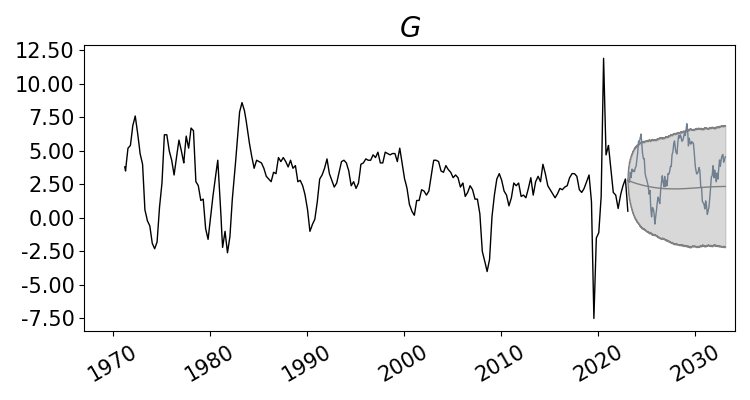}
    }
    }
    \centering
    \caption
    {
    Simulated scenarios for $L$ (in decimals), $I$ (in percentage), and $G$ (in percentage). The plots show the monthly historical data with a single simulated scenario, the median, and the 95\% confidence band.
    }
    \label{fig:simmacro}
\end{figure}

\begin{figure}[!ht]
    \centering
    \subfloat
    {
    \includegraphics[trim = 0mm 0mm 0mm 0mm, clip, width=0.45\textwidth] 
    {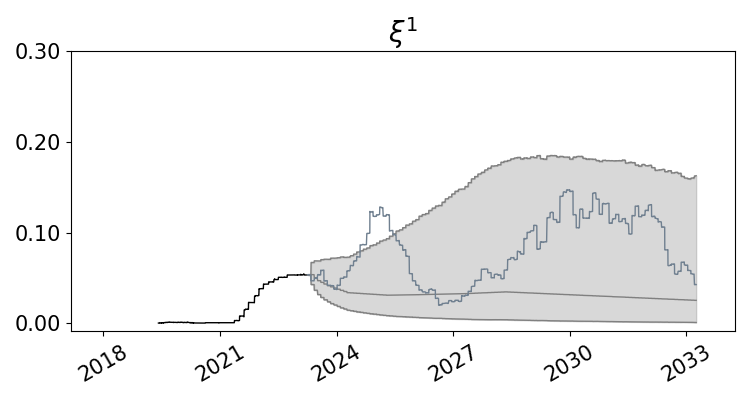}
    }
    \\[-1ex]
    \subfloat
    {
    \includegraphics[trim = 0mm 0mm 0mm 0mm, clip, width=0.45\textwidth] 
    {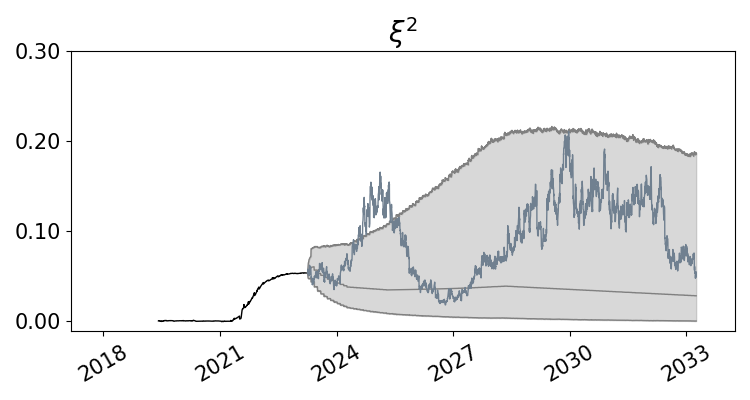}
    }
    \hfill
    \subfloat 
    {
    \includegraphics[trim = 0mm 0mm 0mm 0mm, clip, width=0.45\textwidth] 
    {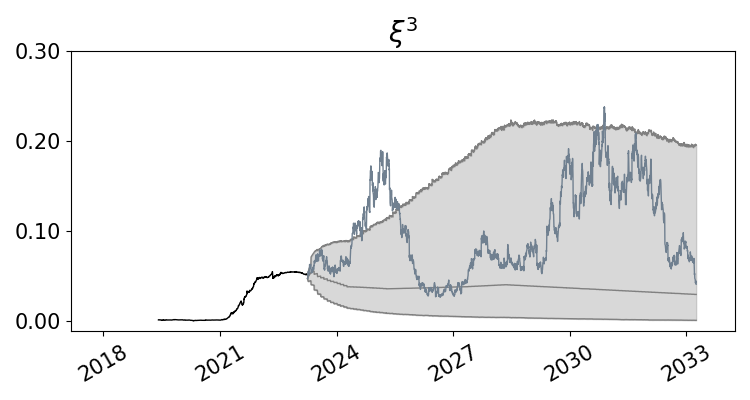}
    }
    \\[-1ex]
    \subfloat 
    {
    \includegraphics[trim = 0mm 0mm 0mm 0mm, clip, width=0.45\textwidth] 
    {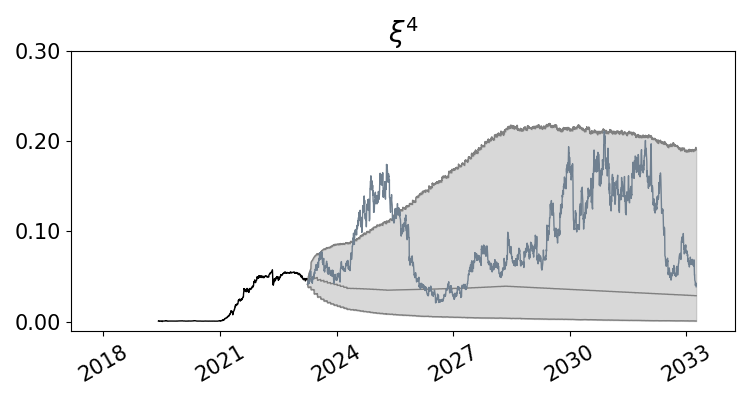}
    }
    \hfill
    \subfloat 
    {
    \includegraphics[trim = 0mm 0mm 0mm 0mm, clip, width=0.45\textwidth] 
    {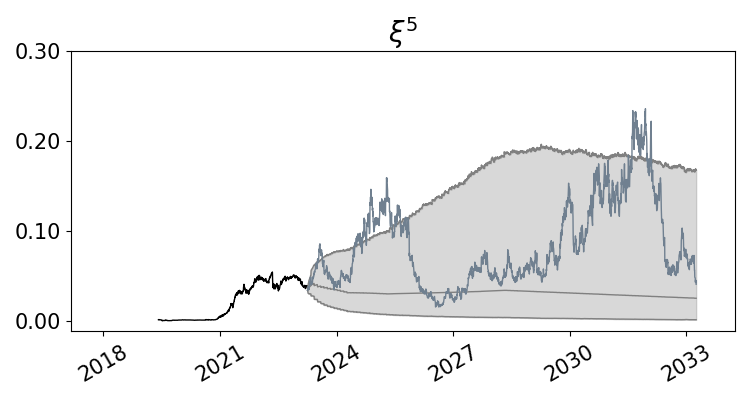}
    }
    \\[-1ex]
    \subfloat     {
    \includegraphics[trim = 0mm 0mm 0mm 0mm, clip, width=0.45\textwidth] 
    {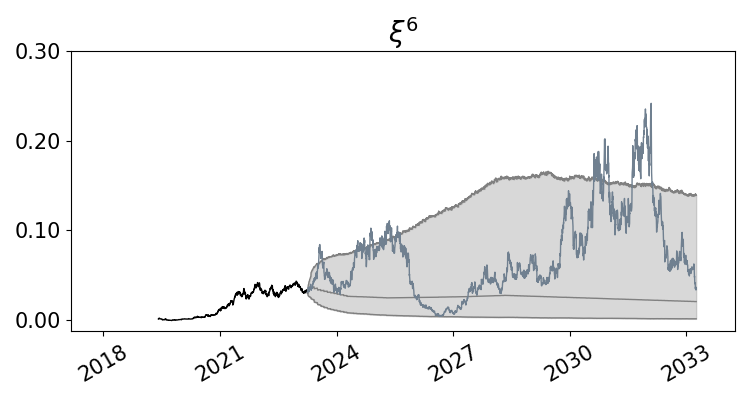}
    }
    \hfill
    \subfloat 
    {
    \includegraphics[trim = 0mm 0mm 0mm 0mm, clip, width=0.45\textwidth] 
    {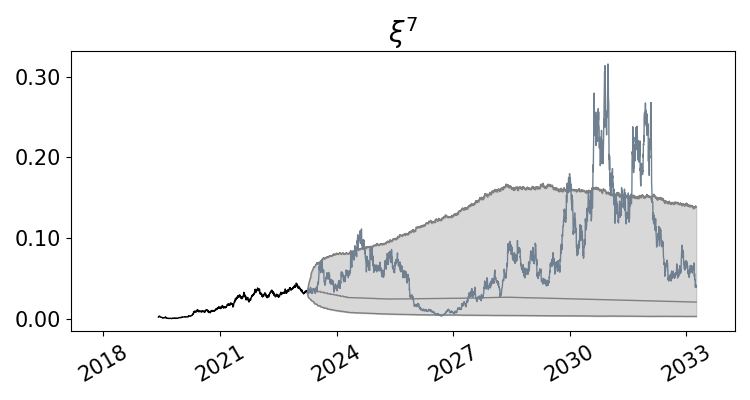}
    }
    \\[-1ex]
    \subfloat 
    {
    \includegraphics[trim = 0mm 0mm 0mm 0mm, clip, width=0.45\textwidth] 
    {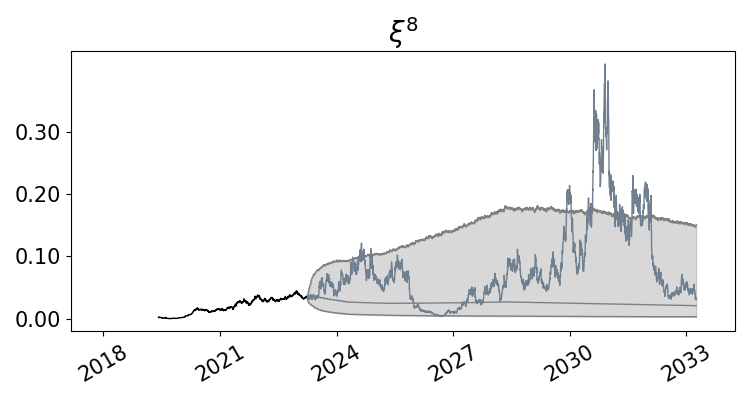}
    }
    \hfill
    \subfloat 
    {
    \includegraphics[trim = 0mm 0mm 0mm 0mm, clip, width=0.45\textwidth] 
    {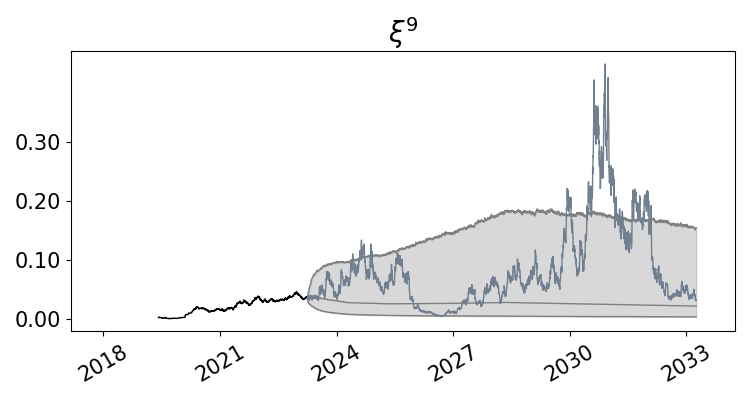}
    }
    \centering
    \caption
    {
    Simulated scenarios for $\xi^k$, $k=2,\ldots,9$. The plots show the daily historical data with a single simulated scenario, the median, and the 95\% confidence band.
    }
    \label{fig:simxi}
\end{figure}

\begin{figure}[!ht]
    \centering
    \subfloat
    {
    \includegraphics[trim = 0mm 0mm 0mm 0mm, clip, width=0.6\textwidth] 
    {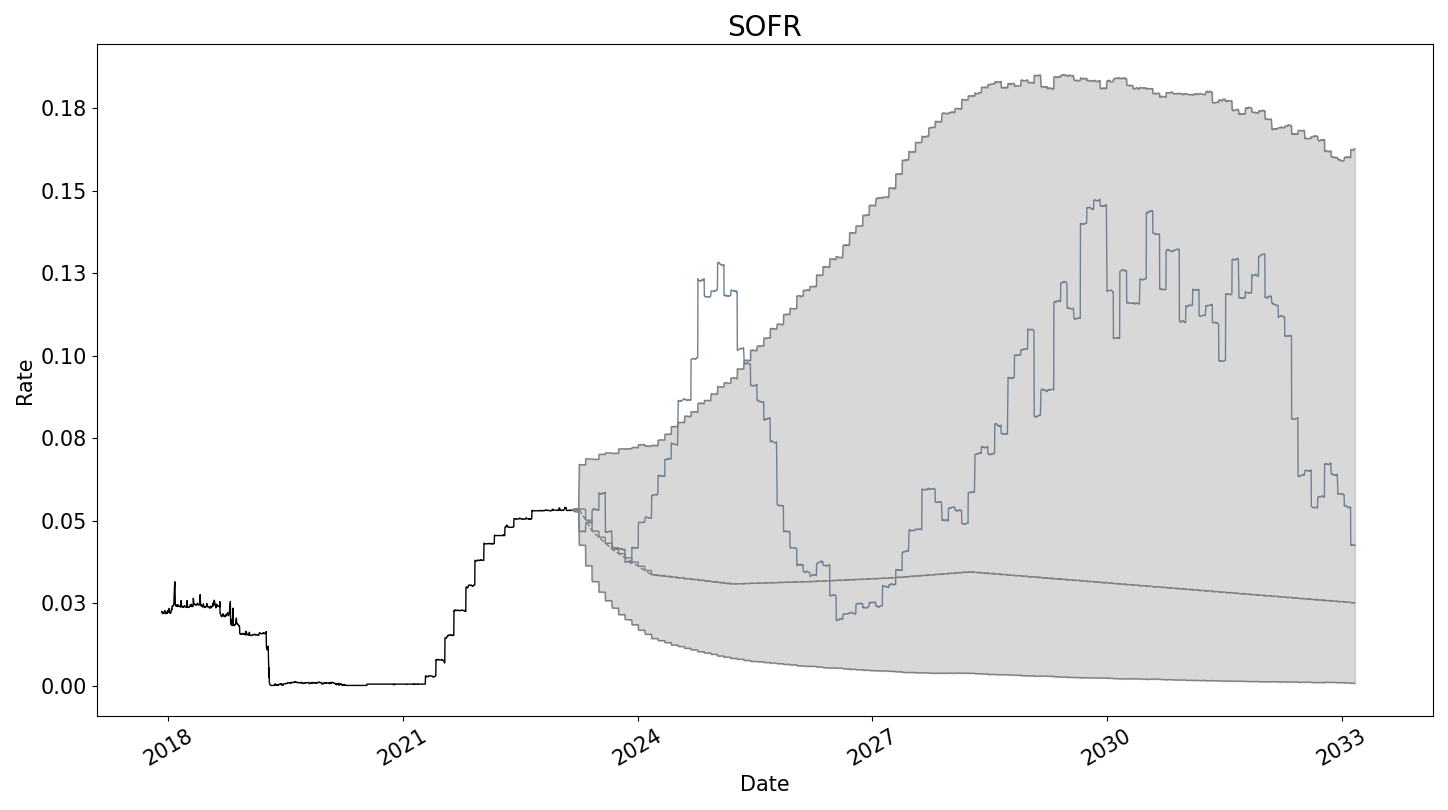}
    }
    \\[-1ex]
    \subfloat
    {
    \includegraphics[trim = 0mm 0mm 0mm 0mm, clip, width=0.6\textwidth] 
    {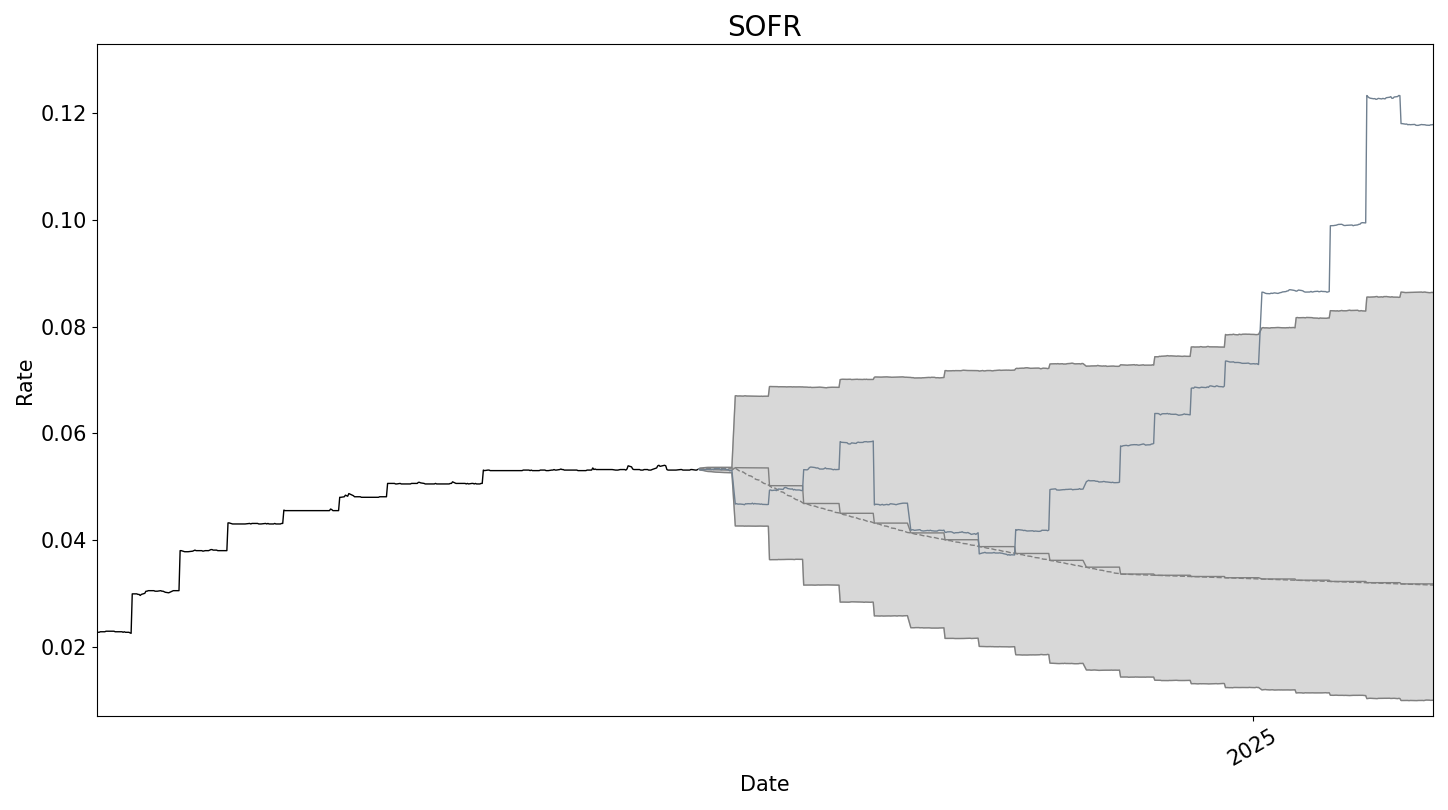}
    }
    \centering
    \caption
    {
    Simulated scenarios for SOFR are shown in the plots. The upper plot depicts 10 years of daily historical data alongside a single simulated scenario, its median, and a 95\% confidence band. The lower plot zooms in on the period from 2022 to 2026, highlighting the last observed forward curve (dashed grey line) and the jumps in the simulated scenario (solid blue line).
    }
    \label{fig:sim_sofr}
\end{figure}

\subsection{Simulating SOFR derivatives payouts}\label{sec:derpo}

This section uses the forward curve simulations from Section~\ref{sec:simexp} to simulate the payouts of selected SOFR derivatives as of August 28, 2024, with maturities extending beyond this date. The derivatives and their payout formulas are given in Table~\ref{table:po}. All the quantities appearing in the formulas can be calculated in each scenario from the simulated overnight forward curves using the formulas given in Section~\ref{sec:forward}. In particular, the SOFR average $R(t_0,t_1)$ in Table~\ref{table:po} is given by formula \eqref{eq:SOFR_Average}, where the SOFR rate $r_t$ is given by 
\[
r_t=F_t(0,1)=\frac{e^{F_t(0)\delta}-1}{\delta};
\]
see \eqref{sofrf}. The zero-coupon bond prices $P_T(T+t_k)$ are given by \eqref{zcb} while the futures rates $F_{t_1-1}(t_1,t_2)$ are given by \eqref{eq:FF}.

\begin{table}[ht]
	\centering
	\caption
		{\label{table:po}
		Payouts in USD for long positions in SOFR derivatives. The payouts follow the CME convention where the quantities in the formulas appear to have flipped signs; see Remark~\ref{rem:imm}. The maturities are $t_1$ for the futures, $t < t_1$ for the call and put, while the maturity of the swaption is denoted by $T$. The strikes are denoted by $X$.
		}
		\resizebox{7cm}{!}
		{
                \renewcommand{\arraystretch}{2}
                \begin{tabular}{lll}
				\toprule
                    {Derivative}
				& {Payout in USD for 1 contract}
				\\
				\midrule
				{Three-month SOFR futures}
				& {$[F_t^b(t_0,t_1)-R(t_0,t_1)]\times 250000$}
		          \\
				\rowcolor{mygray}
				{Call on three-month SOFR futures}
				& {$ P_{t}(t_2)(X - F_{t}(t_1, t_2))^{+}  \times  250000$}
				\\
				{Put on three-month SOFR futures}
				& {$ P_{t}(t_2)(F_{t}(t_1, t_2) - X)^{+}  \times  250000$}
                    \\
                    \rowcolor{mygray}
				{SOFR call swaption}
				& {$\left(X\sum_{k=1}^K P_T(T+t_k) -1 + P_T(T+t_K)\right)^+$}
                    \\
				\bottomrule
			\end{tabular}
		}

\end{table}

If exercised, a swaption, with strike $X$ and maturity $T$, converts to a short position in an OIS with swap rate $X$ and $K$ yearly payments; see~\cite{CMEIRprod, CMEswaption}. Assuming perfectly liquid zero-coupon bonds, the market price of such an OIS would be
\[
X \sum_{k=1}^{K} P_T(T+t_k) \delta (t_{k} - t_{k-1}) - [P_T(T) -P_T(T+t_K) ].
\]
Indeed, the first term is the market price of the outstanding fixed-leg payments, while the second term is the replication cost of the floating leg calculated like that of the futures contract in Section~\ref{sec:forward}. Clearly, $P_T(T)=1$. The swaption payout in Table~\ref{table:po} is thus the positive part of the liquidation value of the swaption at maturity. If the liquidation value is negative, a rational agent would not exercise the option.

SOFR futures options deliver the underlying futures contract when exercised. If the underlying futures market is perfectly liquid, we can assume that the options are cash-settled. Indeed, taking the opposite futures position at the time of exercise would yield a cash payout of 
\begin{equation*}
    P_t(t_2)(X - F_t(t_1, t_2))^+ \times 250000,
\end{equation*}
where $X$ is the option strike and $F_t(t_1, t_2)$ is the futures price at the time $t$ of exercise.

Figure~~\ref{fig:deriv_po_dens} displays the kernel density plots of the payouts of the four derivatives.
The swaption has strike $X=0.033769$, maturity $T=5$ years, and underlying OIS with $K=5$ yearly payments.
The futures contract has maturity $t_1$, the third Wednesday of December 2024, and the bid-rate $F^b_t(t_0, t_1) = 0.0528$, with $t_0$ the third Wednesday of September 2024. The call and the put both have strike $X = 0.045$ and maturity $t_1 - 1$. 
The end of the reference period $t_2$ for the underlying $F_{t_1 - 1}(t_1, t_2)$ is on the third Wednesday of March 2025.
Note that these data are based on observed market data as of August 28, 2024.

\begin{figure}[!ht]
    \centering
    \subfloat
    {
    \includegraphics[trim = 0mm 0mm 0mm 0mm, clip, width=0.45\textwidth] 
    {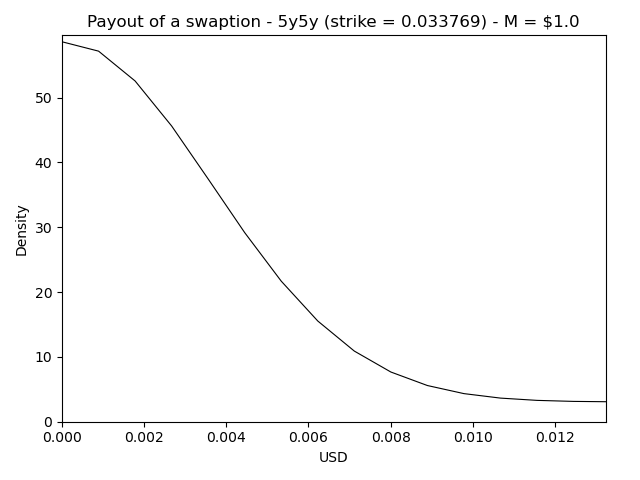}
    }
    \hfill
    \subfloat
    {
    \includegraphics[trim = 0mm 0mm 0mm 0mm, clip, width=0.45\textwidth] 
    {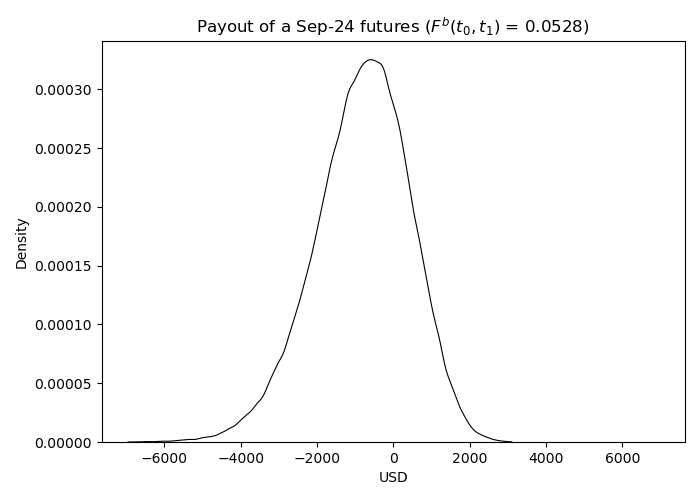}
    }
    \\[-1ex]
    \subfloat
    {
    \includegraphics[trim = 0mm 0mm 0mm 0mm, clip, width=0.45\textwidth] 
    {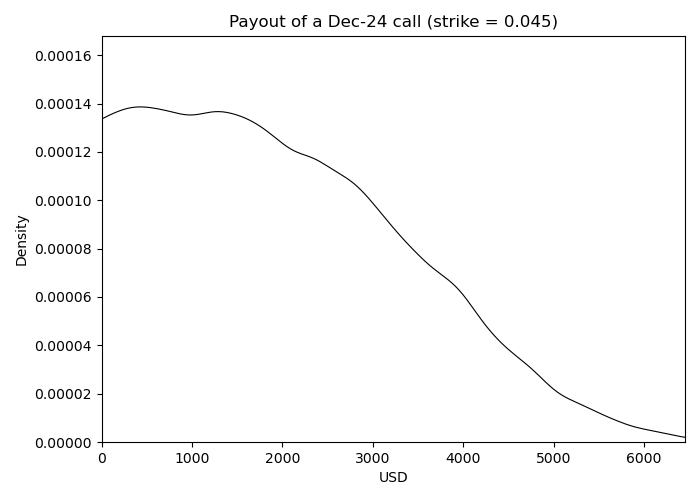}
    }
    \hfill
   \subfloat
    {
    \includegraphics[trim = 0mm 0mm 0mm 0mm, clip, width=0.45\textwidth] 
    {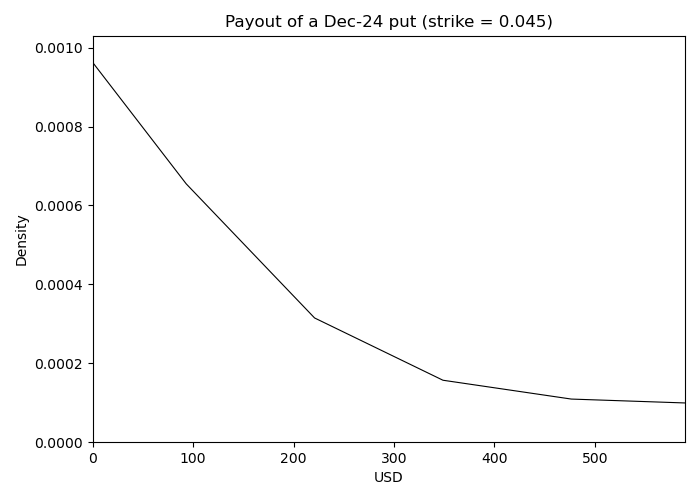}
    }
    \centering
    \caption
    {
    Payout density of SOFR derivatives as of 28 August 2024.
    }
    \label{fig:deriv_po_dens}
\end{figure}

\subsection{Indifference pricing of SOFR derivatives}\label{sec:ip}

A common approach to derivative pricing is to construct a market model together with a martingale measure which is then used for pricing derivatives by taking expectations of their cashflows. Fixing a pricing measure, however, leads to a linear pricing rule, which contradicts the nonlinearities observed in practice. Such problems are addressed explicitly by the techniques of incomplete markets, where one first develops a term structure model under the \enquote{statistical} measure and then defines prices according to the indifference principle; see e.g.~\cite{car9}, \cite{pen14} and their references. This section uses the model developed in Sections~\ref{sec:str}, ~\ref{sec:par}, and~\ref{sec:tsm} in indifference pricing of the derivatives described in the previous section. The exercise below is quite simple in that the only \enquote{hedging instrument} is the money market account represented by the overnight Repo market. More sophisticated hedging strategies will be built in a follow-up paper that uses exchange-traded derivatives to hedge OTC options. 
 
Let $c$ be the payout at maturity $t_1$ of a SOFR derivative and $U:\reals\to\reals$ a concave strictly increasing utility function. The utility indifference selling price at $t_0$ is the least amount of cash $\alpha$ for which an agent, with initial wealth $\bar{w}$, is indifferent between selling the claim $c$ for $\alpha$, and not doing the trade. In the case of the futures, where the price is paid at maturity, the indifference price is the solution $\alpha$ to the equation 
\begin{equation}\label{eq:exputil}
    E U\left(\bar{w} \prod_{t=t_0}^{t_1-1} (1+r_t \delta) + \alpha - c \right) = E U \left(\bar{w}  \prod_{t=t_0}^{t_1-1} (1+r_t \delta)\right).
\end{equation}
In the numerical examples below, we use the exponential utility function $U(c):=\frac{1}{\rho} \left( 1 - e^{-\rho c} \right)$ where $\rho >0$ is a risk aversion parameter. Taking logarithms on both sides of \eqref{eq:exputil} gives 
\begin{equation}\label{eq:sellprfutswapt}
    \alpha = \frac{1}{\rho} \ln\frac{E \exp \left\{ -\rho \left[ \bar{w}  \prod_{t=t_0}^{t_1-1} (1+r_t \delta) - c \right] \right\}}{E \exp \left\{ -\rho \bar{w} \prod_{t=t_0}^{t_1-1} (1+r_t \delta)  \right\}}.
\end{equation}

For the call, put, and swaption, where the price is received at the time of purchase, we use the SOFR overnight roll-over account as a numeraire. This allows for the computation of the selling price again analytically without numerical line search. Indeed, the selling price is the solution $\alpha$ to the equation
\begin{equation}\label{eq:exputilopt}
    EU\left(\bar{w} + \alpha - \frac{c}{\prod_{t=t_0}^{t_1-1} (1+r_t \delta)} \right) = EU\left(\bar{w} \right).
\end{equation}
With the exponential utility, we obtain
\begin{equation}\label{eq:optprice}
    \alpha = \frac{1}{\rho} \ln E \exp \left \{ \frac{\rho c}{\prod_{t=t_0}^{t_1-1} (1+r_t \delta)}\right \}.
\end{equation}

The indifference price $\alpha$ in \eqref{eq:sellprfutswapt} 
and \eqref{eq:optprice} 
can be computed by numerically evaluating the expectations. In the numerical computations below, we will use Monte Carlo with antithetic sampling. That is, we simulate a random sample of $N$ scenarios and obtain another $N$ scenarios by reflecting the underlying Gaussian process.

Figures~\ref{fig:deriv_price_surf} and~\ref{fig:sofr_options_rho}  plot the indifference selling prices for the four SOFR derivatives studied in Section~\ref{sec:derpo}.
Figure~\ref{fig:deriv_price_surf} displays the indifference price of the futures as a function of the initial wealth $\bar w$ and the risk aversion parameter $\rho$.
As $\bar{w}$ increases, the price increases. With an initial financial position of cash only, the agent is less desperate for extra cash and asks for a higher price to compensate for taking on the additional risk of selling the derivative.
As $\rho$ increases, the price increases. Because their initial financial position consists only of cash, the higher the value of $\rho$ the more risk-averse the agent is. They require more money to sell the derivative. This is also observed in the plots of Figure~\ref{fig:sofr_options_rho} which display the swaption and option price as functions of the risk aversion parameter $\rho$.

\begin{figure}[!ht]
    \centering
    {
    \includegraphics[trim = 0mm 0mm 0mm 0mm, clip, width=0.55\textwidth] 
    {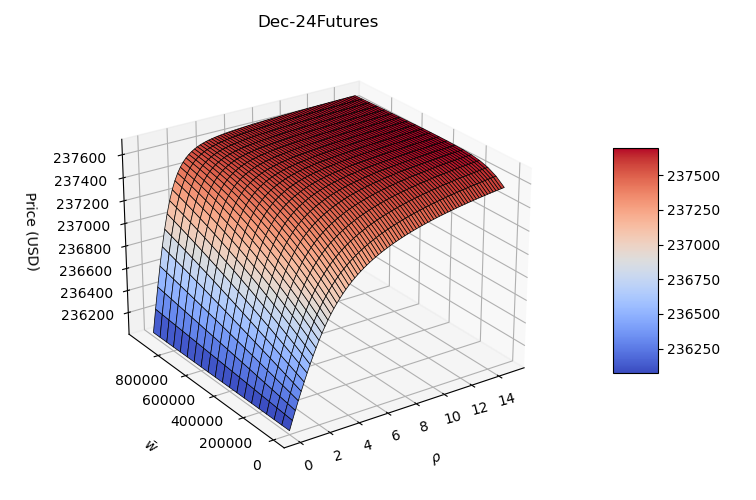}
    }
    \centering
    \caption
    {
    Selling price of SOFR three-month futures as a function of the initial wealth $\bar{w}$ and the risk aversion parameter $\rho$ (scaled by a factor of $10^{-3}$) as of 28 August 2024.
    }
    \label{fig:deriv_price_surf}
\end{figure}

\begin{figure}[!ht]
    \centering
    \subfloat
    {
    \includegraphics[trim = 0mm 0mm 0mm 0mm, clip, width=0.45\textwidth] 
    {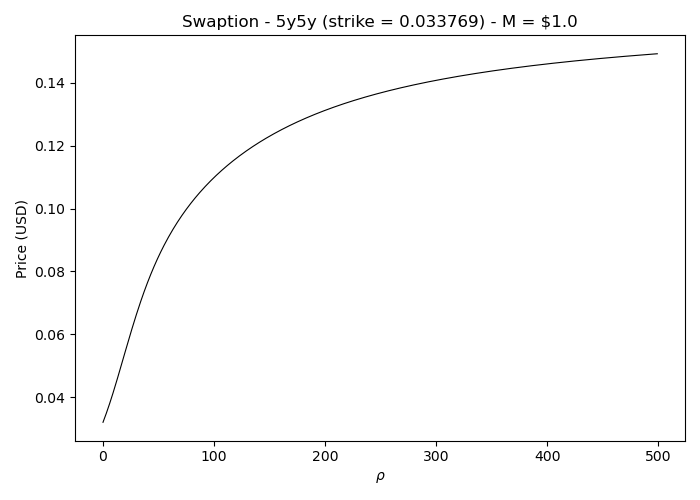}
    }
    \\
    \subfloat
    {
    \includegraphics[trim = 0mm 0mm 0mm 0mm, clip, width=0.45\textwidth] 
    {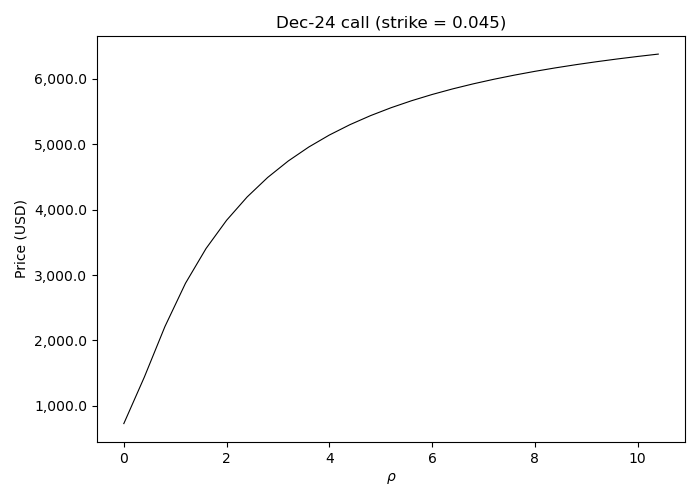}
    }
    \hfill
    \subfloat
    {
    \includegraphics[trim = 0mm 0mm 0mm 0mm, clip, width=0.45\textwidth] 
    {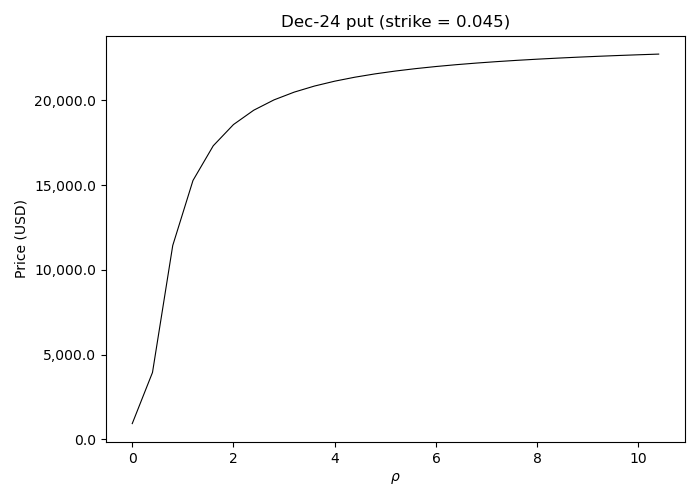}
    }
    \centering
    \caption
    {
    Selling price of SOFR swaption and options on three-month futures as a function of the risk aversion parameter $\rho$ (scaled by a factor of $10^{-3}$) as of 28 August 2024.
    }
    \label{fig:sofr_options_rho}
\end{figure}

\section{Arbitrage}\label{sec:arb}

This section takes a brief look at no-arbitrage conditions on discrete-time term structure models.
The analysis is based on Theorem~\ref{thm:js} below which provides a convenient no-arbitrage condition for perfectly liquid market models in discrete time. It is concerned with an abstract model of a financial market where $d+1$ perfectly liquid securities can be traded over discrete finite time. 

We will denote the unit price of asset $j=0,\ldots,d$ at time $t=0,\ldots,T$ by $s^j_t$. We will assume that the price of asset $j=0$ is always strictly positive and denote the \enquote{discounted} prices by
\[
S_t^j:=s^j_t/s^0_t.
\]
In particular, $S_t^0\equiv 1$. It is then easily shown that the discounted terminal wealth of a self-financing trading strategy $x=(x_t)_{t=0}^{T-1}$ can be expressed as the discrete-time stochastic integral
\[
V_T(x):=\sum_{t=0}^{T-1}x_t\cdot\Delta S_{t+1},
\]
where $S_t=(S_t^j)_{j=1}^d$, $\Delta S_t:=S_t-S_{t-1}$ and $x_t=(x_t^j)_{j=1}^J$ denotes the portfolio of \enquote{risky assets} $j=1,\ldots,d$ held over period $(t,t+1]$. As usual, we model the trader's uncertainties and information by a probability space $(\Omega,\F,P)$ and a filtration $\FF$, respectively. The linear space of adapted trading strategies $x=(x_t)_{t=0}^{T-1}$ will be denoted by $\N$. The market model is {\em arbitrage-free} if every trading strategy $x\in\N$ with $V_T(x)\ge 0$ almost surely satisfies $V_T(x)=0$ almost surely.

The following theorem is from \cite[Theorem~3]{js98}, but one should note that part g of the statement there has a typo: the symbol \enquote{$Q_n$} should be replaced by \enquote{$\bar Q_n$}. An alternative proof and an extended statement can be found in~\cite[Remark~2.120]{pp24}. The {\em ${\F_t}$-conditional support} of a random variable $X\in L^0(\Omega,\F,P;\reals^d)$ is the smallest closed-valued $\F_t$-measurable mapping $\supp_t X:\Omega\tos\reals^d$ such that $X\in\supp_t X$ almost surely. The {\em convex hull} and the {\em relative interior} of a set $C\subseteq\reals^d$ will be denoted by $\co C$ and $\ri C$, respectively.

\begin{theorem}[Jacod and Shiryaev]\label{thm:js}
A price process $s$ is arbitrage-free if and only if the discounted price process $S$ satisfies
\[
S_t\in\ri\co\supp_t S_{t+1}
\]
almost surely for every $t=0,\ldots,T-1$.
\end{theorem}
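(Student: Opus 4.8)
The plan is to prove the equivalence by first localizing arbitrage to a single period, then characterizing one-period no-arbitrage geometrically, and finally upgrading the pointwise geometric argument to a measurable one. Throughout I would work with the increments $\Delta S_{t+1}=S_{t+1}-S_t$ and note that, since $S_t$ is $\F_t$-measurable and both $\co$ and $\ri$ commute with translation, we have $\supp_t S_{t+1}=S_t+\supp_t\Delta S_{t+1}$, so the stated condition $S_t\in\ri\co\supp_t S_{t+1}$ is equivalent to $0\in\ri\co\supp_t\Delta S_{t+1}$ almost surely. It therefore suffices to relate this translated condition to the absence of arbitrage.

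First I would establish a localization lemma: the model is arbitrage-free if and only if, for every $t$, there is no $\F_t$-measurable $\eta$ with $\eta\cdot\Delta S_{t+1}\ge 0$ almost surely and $\eta\cdot\Delta S_{t+1}>0$ on a set of positive probability. One implication is immediate, since a one-period arbitrage in period $t$ yields an element of $\N$ trading only over $(t,t+1]$. For the converse I would argue by backward induction on the partial gains $G_s:=\sum_{r<s}x_r\cdot\Delta S_{r+1}$. Given $V_T(x)=G_T\ge 0$, the $\F_{T-1}$-measurable strategy $\eta=x_{T-1}\mathbbm{1}_{\{G_{T-1}<0\}}$ satisfies $\eta\cdot\Delta S_T\ge 0$ almost surely, with strict inequality on $\{G_{T-1}<0\}$, so one-period no-arbitrage forces $P(G_{T-1}<0)=0$. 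Applying the same step to the truncated strategy $(x_0,\ldots,x_{T-2})$ shows $G_s\ge 0$ for every $s$, and in particular $G_1=x_0\cdot\Delta S_1=0$. Climbing back up, if $G_s=0$ then $x_s\cdot\Delta S_{s+1}=G_{s+1}\ge 0$, which must vanish by one-period no-arbitrage, so $G_{s+1}=0$ and finally $V_T(x)=0$.

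Next I would prove the one-period characterization pointwise: writing $C(\omega):=\co\supp_t\Delta S_{t+1}(\omega)$, absence of a one-period arbitrage is equivalent to $0\in\ri C$. If $0\notin\ri C$, proper separation of the point $0$ from the convex set $C$ produces $\eta\neq 0$ with $\eta\cdot y\ge 0$ on $C$ and $\eta\cdot y>0$ for some $y\in C$; since every neighbourhood of a point of the conditional support carries positive conditional mass, $\eta\cdot\Delta S_{t+1}\ge 0$ almost surely with strict inequality on a set of positive probability, i.e. an arbitrage. Conversely, if $0\in\ri C$ and $\eta\cdot\Delta S_{t+1}\ge 0$ almost surely, then $C$ lies in the halfspace $\{\eta\cdot y\ge 0\}$; a relative-interior point at which the linear functional attains its boundary value $0$ forces $\eta$ to annihilate the affine hull of $C$, so $\eta\cdot\Delta S_{t+1}=0$ almost surely and no arbitrage exists.

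The hardest step, and the one I would treat most carefully, is making this pointwise dichotomy hold simultaneously for almost every $\omega$ with the arbitrage vector chosen $\F_t$-measurably. The map $\omega\mapsto\supp_t\Delta S_{t+1}(\omega)$ is a closed-valued $\F_t$-measurable multifunction, and I would verify that passing to $\co$ and $\ri$ preserves measurability, so that the \enquote{bad} set $B:=\{0\notin\ri C\}$ is $\F_t$-measurable. On $B$ the admissible separating vectors form a nonempty measurable multifunction, and a measurable selection theorem (Kuratowski--Ryll-Nardzewski, or the random-set machinery of \cite{pp24}) supplies an $\F_t$-measurable $\eta$ realising the separation simultaneously; extending by $\eta=0$ off $B$ produces a genuine one-period arbitrage exactly when $P(B)>0$. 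Combining this measurable dichotomy with the localization lemma then yields the claimed equivalence.
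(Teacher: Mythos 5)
The paper offers no internal proof to compare against: it quotes this result from \cite[Theorem~3]{js98} (noting a typo there) and points to \cite[Remark~2.120]{pp24} for an alternative proof. So your proposal must be judged on its own merits, and in architecture it is the standard proof from the literature, correctly assembled. The reduction $\supp_t S_{t+1}=S_t+\supp_t\Delta S_{t+1}$ (valid because $S_t$ is $\F_t$-measurable, and $\co$, $\ri$ commute with translations) correctly replaces the stated condition by $0\in\ri\co\supp_t\Delta S_{t+1}$. Your localization lemma is proved correctly: the strategy $\eta=x_{T-1}\mathbbm{1}_{\{G_{T-1}<0\}}$ does give a one-period arbitrage whenever $P(G_{T-1}<0)>0$, and the backward--forward induction forcing $G_s\equiv 0$ is the standard argument. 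The pointwise one-period characterization is also right: $0\notin\ri C$ is exactly the condition under which $\{0\}$ and $C$ can be properly separated (\cite[Theorem~11.3]{roc70a}), and conversely a linear functional that is nonnegative on $C$ and vanishes at a point of $\ri C$ must vanish on all of $C$, killing any candidate arbitrage.

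The one place where the write-up over-claims is the measurable-selection step, which you rightly flag as the crux. The multifunction of properly separating vectors, $\omega\mapsto\{\eta:\eta\cdot y\ge 0\ \forall y\in C(\omega),\ \sup_{y\in C(\omega)}\eta\cdot y>0\}$, is \emph{not} closed-valued: a sequence of properly separating unit vectors can converge to a vector that annihilates $C(\omega)$ (e.g.\ $C=[0,1]\times\{0\}$ in $\reals^2$ and $\eta_n=(1/n,1)/|(1/n,1)|\to(0,1)$), so Kuratowski--Ryll-Nardzewski does not apply to it off the shelf. The repair is standard and is what the random-set machinery of \cite{pp24} actually delivers: take a Castaing representation $(y_n)$ of $\supp_t\Delta S_{t+1}$, pass to the closed-valued measurable multifunction $\Gamma(\omega)=\{\eta:|\eta|\le 1,\ \eta\cdot y_n(\omega)\ge 0\ \forall n\}$, and choose a measurable maximizer over $\Gamma(\omega)$ of the Carath\'eodory function $f(\omega,\eta)=\sum_n 2^{-n}\min(\eta\cdot y_n(\omega),1)$ via the measurable maximum theorem; on the bad set $B$ the maximal value is strictly positive, and any measurable argmax is a properly separating, hence arbitrage-generating, selection. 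With that substitution your outline is a complete and correct proof; as literally written, the appeal to KRN is the one step that would fail.
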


\subsection{Arbitrage in zero-coupon bonds}\label{sec:arbzcb}

In the literature of mathematical term structure models, it is common to assume that zero-coupon bonds of all maturities can be traded dynamically over time without frictions. No-arbitrage conditions then impose strong consistency conditions on the development of the zero-coupon bond prices and the forward curve. This section takes a brief look at what kind of conditions this would impose on the forward curve dynamics in the discrete time setting.

Consider a term structure model where the future forward curves
\[
F_t := (F_t(s))_{s=0}^{\bar T}
\]
are modeled as $\F_t$-measurable random vectors. Recall from Section~\ref{sec:forward} that, in perfectly liquid markets, the forward curve is related to prices of zero-coupon bonds by
\begin{equation}\label{eq:zcbd}
P_t(T) = e^{-\sum_{s=0}^{T-1}F_t(s)\delta}.
\end{equation}

\begin{theorem}
Assume that, at each time $t$, zero-coupon bonds of all maturities up to $\bar T$ days ahead can be traded without frictions at prices given by \eqref{eq:zcbd}. The market is arbitrage-free if and only if 
\[
H((F_t(s))_{s=1}^{\bar T})\in\ri\co H\left(\supp_t(F_{t+1}(s))_{s=0}^{\bar T-1}\right),
\]
where the function $H:\reals^{\bar T}\to\reals^{\bar T}$ is given by $H=\exp\circ L$, where the exponential is applied componentwise and 
\[
L=-\delta\begin{bmatrix}
1 & 0 & & 0\\
1 & 1 & & 0\\
\vdots & & & \\
1 & 1&\cdots & 1
\end{bmatrix}.
\]
\end{theorem}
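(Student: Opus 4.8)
The plan is to realize the frictionless zero-coupon bond market as a special case of the abstract market in Theorem~\ref{thm:js} and then to read off the stated condition by a direct computation. First I would work one period $[t,t+1]$ at a time and identify the tradeable assets. By \eqref{eq:zcbd}, the forward curve $F_t$ determines the time-$t$ prices $P_t(T)=\exp(-\delta\sum_{s=0}^{T-1}F_t(s))$ of the bonds maturing at $t+1,\dots,t+\bar T+1$. I would take the shortest bond $P_t(1)=e^{-F_t(0)\delta}$, which pays $1$ at $t+1$, as the numéraire, and index by $i=1,\dots,\bar T$ the remaining $\bar T$ risky bonds, so that asset $i$ is the bond maturing at $t+i+1$. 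Multi-period absence of arbitrage is the conjunction of the per-$t$ conditions of Theorem~\ref{thm:js}, so it suffices to treat each such one-period submarket.

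The computation is the heart of the matching. Discounting by the numéraire, the time-$t$ discounted price of asset $i$ is
\[
S_t^i=\frac{P_t(i+1)}{P_t(1)}=\exp\Big(-\delta\sum_{s=1}^{i}F_t(s)\Big),
\]
while its discounted value one step later, once it has become an $i$-day bond and the numéraire has paid $1$, is $S_{t+1}^i=P_{t+1}(i)=\exp(-\delta\sum_{s=0}^{i-1}F_{t+1}(s))$. Since $(Lv)_i=-\delta\sum_{j=1}^i v_j$ and $H=\exp\circ L$ acts componentwise, these are exactly $S_t=H((F_t(s))_{s=1}^{\bar T})$ and $S_{t+1}=H((F_{t+1}(s))_{s=0}^{\bar T-1})$. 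I would also record that the choice of numéraire is immaterial: passing to the rolling money-market account $B$ (with $B_{t+1}/B_t=1/P_t(1)$, which is $\F_t$-measurable) multiplies both $S_t$ and $S_{t+1}$ by the common strictly positive $\F_t$-measurable factor $P_t(1)/B_t$, and scaling by such a factor commutes with $\supp_t$, $\co$ and $\ri$. This reconciles the one-period numéraire with the fixed-numéraire hypothesis of Theorem~\ref{thm:js}.

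Theorem~\ref{thm:js} then gives arbitrage-freeness iff $S_t\in\ri\co\supp_t S_{t+1}$, i.e. $H((F_t(s))_{s=1}^{\bar T})\in\ri\co\supp_t H((F_{t+1}(s))_{s=0}^{\bar T-1})$. It remains to move $H$ outside the conditional support. The matrix $L$ is lower triangular with diagonal entries $-\delta\neq 0$, hence invertible, so $H$ is a homeomorphism of $\reals^{\bar T}$ onto $(0,\infty)^{\bar T}$. Writing $Y:=(F_{t+1}(s))_{s=0}^{\bar T-1}$, I would prove $\supp_t H(Y)=\overline{H(\supp_t Y)}$: the inclusion $\subseteq$ holds because $H(\supp_t Y)$ contains $H(Y)$ almost surely and is a closed-valued $\F_t$-measurable map, so the minimal support sits inside its closure; the reverse inclusion holds because $H^{-1}(\supp_t H(Y))$ is closed, $\F_t$-measurable and contains $Y$ almost surely, hence contains $\supp_t Y$. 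Finally, since $\ri\co A=\ri\,\overline{\co A}=\ri\co\bar A$ for every set $A$, the closure is absorbed and $\ri\co\supp_t H(Y)=\ri\co H(\supp_t Y)$, which is exactly the asserted condition.

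The main obstacle is this support-commutation step: the no-arbitrage condition naturally lives on the price vectors $H(\cdot)$, whereas the statement is phrased on the forward-rate vectors, so one must transport $\ri\co\supp_t$ through the nonlinear map $H$. This goes through cleanly only because $H$ is a homeomorphism and because the relative interior of a convex hull is insensitive to closure. By comparison, the bookkeeping with the rolling maturity window and the numéraire change is routine.
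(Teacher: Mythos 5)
Your proposal is correct and follows essentially the same route as the paper: reduce to the Jacod--Shiryaev criterion (Theorem~\ref{thm:js}) with the shortest-bond/overnight numeraire, identify the discounted bond prices with $H$ applied to the shifted forward-curve segments, transport $\supp_t$ through the continuous map $H$, and absorb the closure via $\ri\co\cl C=\ri\co C$. The only cosmetic differences are that the paper formalizes the market through rolling self-financing strategies $s^T$ (so the multi-period Jacod--Shiryaev theorem applies directly to fixed price processes, rather than period-by-period one-period markets plus numeraire invariance), and it cites \cite[Theorem~2.43.4]{pp24} for the support identity $\supp_t H(Y)=\cl H(\supp_t Y)$ that you prove by hand.
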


\begin{proof}
Assuming perfect liquidity, one unit of cash invested at time $t$ in the zero-coupon bond with maturity $t+T+1$ yields 
\[
\frac{P_{t+1}(T)}{P_t(T+1)}
\]
units of cash at time $t+1$. The stochastic process $s^T$ defined recursively by $s_0^T=1$ and 
\[
s^T_{t+1} = s^T_t\frac{P_{t+1}(T)}{P_t(T+1)}\quad t=1,2, \ldots
\]
describes the development of the value of a self-financing investment strategy that, at each time $t$, reinvests everything in the zero coupon bond with maturity $t+T+1$. In particular, the process $s^0$ describes investments in the overnight market:
\[
s^0_{t+1} = s^0_t\frac{1}{P_t(1)}\quad t=1,2, \ldots
\]
Taking $s^0$ as the numeraire, the discounted prices become
\begin{equation}\label{eq:S}
S^T_{t+1}:=\frac{s^T_{t+1}}{s^0_{t+1}} = S^T_t\frac{P_{t+1}(T)}{P_t(T+1)/P_t(1)}.
\end{equation}

Denote $S_t:=(S_t^T)_{T=1}^{\bar T}$. By Theorem~\ref{thm:js}, the market with perfectly liquid zero-coupon bonds is arbitrage-free if and only if $S_t\in\ri\co\supp_t S_{t+1}$ for all $t$ almost surely. Using \eqref{eq:S}, this can be written as
\[
(S_t^T)_{T=1}^{\bar T}
\in\ri\co\supp_t \left(S^T_t\frac{P_{t+1}(T)}{P_t(T+1)/P_t(1)}\right)_{T=1}^{\bar T}
\]
or, equivalently,
\[
\left(\frac{P_t(T+1)}{P_t(1)}\right)_{T=1}^{\bar T}\in\ri\co\supp_t (P_{t+1}(T))_{T=1}^{\bar T}.
\]
By \eqref{eq:zcbd},
\[
\left(\frac{P_t(T+1)}{P_t(1)}\right)_{T=1}^{\bar T} = \left(e^{-F_t(1)\delta},\ldots,e^{-\sum_{s=1}^{\bar T}F_t(s)\delta}\right) = H((F_t(s))_{s=1}^{\bar T})
\]
and
\[
(P_{t+1}(T))_{T=1}^{\bar T} = \left(e^{-F_{t+1}(0)\delta},\ldots,e^{-\sum_{s=0}^{\bar T}F_{t+1}(s)\delta}\right) = H\left((F_{t+1}(s))_{s=0}^{\bar T-1}\right).
\]
The no-arbitrage condition can thus be written as
\[
H((F_t(s))_{s=1}^{\bar T})\in\ri\co\supp_t H\left((F_{t+1}(s))_{s=0}^{\bar T-1}\right).
\]
By \cite[Theorem~2.43.4]{pp24}, the continuity of $H$ implies
\[
\supp_t H\left((F_{t+1}(s))_{s=0}^{\bar T-1}\right) = \cl H\left(\supp_t(F_{t+1}(s))_{s=0}^{\bar T-1}\right).
\]
The claim now follows from the fact that $\ri\co\cl C=\ri\co C$ for any set $C$ in a Euclidean space; see \cite[Theorem~6.3]{roc70a}.
\end{proof}

Note that $H$ maps $\reals^{\bar T}_{++}$ to
\[
D:=\{p\in\reals^{\bar T}\mid 1>p_1>p_2>\cdots>p_{\bar T}>0\}.
\]
The no-arbitrage condition thus holds, in particular, if 
\[
\supp_t(F_{t+1}(s))_{s=0}^{\bar T-1}=\reals^{\bar T}_+
\]
i.e.\ if the conditional support of the day-ahead forward curve contains every positive forward curve. This condition is far from necessary but it could be enforced e.g.\ by multiplying each $F_t(s)$ by $e^{\epsilon_t}$, where $\epsilon_t$ are independent Gaussian variables with nonzero variance. Such additional randomness could be interpreted as observation errors, as is common in the theory of affine term structure models; see e.g.~\cite[Section~4]{cheridito2007market}. Taking the variance of $\epsilon_t$ sufficiently close to zero, the additional randomness would not have a material effect on portfolio optimization or indifference pricing in the face of bid-ask spreads. More importantly, one cannot trade perfectly liquid zero-coupon bonds with arbitrary maturities in practice, so the no-arbitrage conditions derived above aren't all that relevant in real-life applications.

\subsection{Arbitrage in futures market}

Unlike zero-coupon bonds, futures contracts are exchange-traded and fairly liquid for many reference periods. It is thus natural to ask what a forward curve model should satisfy to preclude arbitrage opportunities when trading futures. 

Consider a market with a finite number $J$ of futures contracts each one $j=1,\ldots, J$ for a given reference period $(t_0^j,t_1^j)$. For simplicity, we assume that the futures are perfectly liquid so that, both the bid and ask quotes for a futures contract for reference period $(t^j_0,t^j_1)$ are given for all times $t<t^j_0$ by equation \eqref{eq:FF}, i.e.
\begin{equation}\label{eq:frates}
F_t(t^j_0,t^j_1) = \frac{1}{(t^j_1-t^j_0)\delta}\Bigg[\exp\Bigg(\sum_{s=t^j_0-t}^{t^j_1-t-1}F_t(s)\delta\Bigg)-1\Bigg].
\end{equation}

Under the assumption of perfect liquidity, one can enter an arbitrary futures position $z^j_t\in\reals$ at time $t$ without a cost and the position would yield 
\[
z^j_t[R(t^j_0,t^j_1)-F_t(t^j_0,t^j_1)]
\]
units of cash at maturity $t^j_1$; see Section~\ref{sec:forward}. A dynamic trading strategy $z^j=(z^j_t)_{t=0}^{T}$ would yield
\begin{align}
C^j(z^j):=&\sum_{t=0}^{T} z^j_t[R(t^j_0,t^j_1)-F_t(t^j_0,t^j_1)]\nonumber\\
=&R(t^j_0,t^j_1)\sum_{t=0}^{T} z^j_t-\sum_{t=0}^{T} z^j_tF_t(t^j_0,t^j_1)\label{eq:C}
\end{align}
units of cash at maturity. We assume that $T<t^j_1$ and that the futures position is closed at $T$ so that
\[
\sum_{t=0}^{T} z^j_t = 0
\]
and the first term in \eqref{eq:frates} drops out. The payout of a trading strategy $z\in\N$ in the $J$ futures contracts is given by
\[
C(z):=\sum_{j=1}^JC_j(z^j).
\]
The no-arbitrage condition means that, if $z\in\N$ is such that $C(z)\ge 0$ almost surely, then $C(z)=0$ almost surely.

\begin{theorem}\label{thm:arbfut}
Assume that, at each time $t$,  futures contracts for reference periods $(t^j_0,t^j_1)$, $j=1,\ldots,J$ can be traded without bid-ask spreads at futures rates given by \eqref{eq:frates}. The market is arbitrage-free if and only if
\[
\exp((G_t^j)_{j=1}^J)\in\ri\co\exp\left(\supp_t(G_{t+1}^j)_{j=1}^J\right),
\]
where
\[
G_t^j := \sum_{s=t^j_0-t}^{t^j_1-t-1}F_t(s)\delta.
\]
\end{theorem}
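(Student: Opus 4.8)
The plan is to reduce the futures market to the abstract model of Theorem~\ref{thm:js} and then translate the resulting condition on the futures rates $F_t$ into the stated condition on the exponentiated cumulative forward rates $G_t$.

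First I would rewrite the payout of a closing-out strategy as a discrete stochastic integral. Writing $F^j_t:=F_t(t^j_0,t^j_1)$ and using the constraint $\sum_{t=0}^T z^j_t=0$, the term carrying $R(t^j_0,t^j_1)$ in \eqref{eq:C} vanishes, leaving $C^j(z^j)=-\sum_{t=0}^T z^j_t F^j_t$. Introducing the cumulative position $X^j_t:=\sum_{s=0}^t z^j_s$ (so that $X^j_{-1}=0$ and $X^j_T=0$ by the closing constraint) and summing by parts gives
\[
C^j(z^j)=-\sum_{t=0}^T z^j_t F^j_t=\sum_{t=0}^{T-1}X^j_t\,\Delta F^j_{t+1}.
\]
Since each $z^j$ is adapted, so is $X^j$; conversely, any adapted $(X^j_t)_{t=0}^{T-1}$ comes from a unique closing-out strategy via $z^j_t=X^j_t-X^j_{t-1}$ for $t\le T-1$ and $z^j_T=-X^j_{T-1}$. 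Hence the attainable payouts $C(z)=\sum_{j}C^j(z^j)=\sum_{t=0}^{T-1}X_t\cdot\Delta F_{t+1}$ are exactly the terminal wealths $V_T(X)$ of the abstract model with discounted price process $S_t:=F_t:=(F^j_t)_{j=1}^J$ and cash as numeraire. The two notions of arbitrage therefore coincide, and Theorem~\ref{thm:js} yields that the futures market is arbitrage-free if and only if $F_t\in\ri\co\supp_t F_{t+1}$ almost surely for every $t$.

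It then remains to rewrite this in terms of $G_t=(G^j_t)_{j=1}^J$. By \eqref{eq:frates}, $F^j_t=\Phi(\exp(G_t))_j$, where $\exp$ acts componentwise and $\Phi(p)_j:=(p_j-1)/[(t^j_1-t^j_0)\delta]$ is an invertible affine map (each denominator is a nonzero deterministic constant). Because $\Phi\circ\exp$ is continuous, the support result \cite[Theorem~2.43.4]{pp24} gives $\supp_t F_{t+1}=\cl\big(\Phi(\exp(\supp_t G_{t+1}))\big)$. Applying $\ri\co\cl C=\ri\co C$ \cite[Theorem~6.3]{roc70a} removes the closure, and using that $\Phi$ commutes with $\co$ (elementary for affine maps) and with the relative interior of a convex set \cite[Theorem~6.6]{roc70a}, I obtain
\[
\ri\co\supp_t F_{t+1}=\Phi\big(\ri\co\exp(\supp_t G_{t+1})\big).
\]
Since $F_t=\Phi(\exp(G_t))$ and $\Phi$ is injective, $F_t\in\ri\co\supp_t F_{t+1}$ is equivalent to $\exp(G_t)\in\ri\co\exp(\supp_t G_{t+1})$, which is the claim.

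The main obstacle I expect is the first step: getting the summation-by-parts identity and the bijection between closing-out strategies and adapted cumulative positions exactly right, so that the futures payouts are identified verbatim with the integrals $V_T(X)$ for which Theorem~\ref{thm:js} is stated, and checking that $X$ ranges over all adapted strategies. Once this identification is secured, the passage through $\Phi\circ\exp$ is routine, relying only on the cited commutation properties of support, closure, convex hull, and relative interior under continuous and affine maps.
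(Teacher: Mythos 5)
Your proposal is correct and follows essentially the same route as the paper: the same reduction via cumulative positions and summation by parts to the abstract setting of Theorem~\ref{thm:js}, followed by the same transfer of the resulting condition through a continuous map using the conditional-support theorem of \cite{pp24} and the identity $\ri\co\cl C=\ri\co C$ from \cite{roc70a}. The only immaterial difference is bookkeeping: the paper absorbs the constant $1/[(t^j_1-t^j_0)\delta]$ into the trading strategy (noting $\Delta F_{t+1}(t^j_0,t^j_1)$ is a constant multiple of $\Delta S^j_{t+1}$ with $S^j_t=\exp(G^j_t)$) and applies Theorem~\ref{thm:js} directly to $S$, whereas you apply it to the futures rates themselves and then push the condition through the affine bijection, which requires the additional elementary facts that affine maps commute with $\co$ and $\ri$.
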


\begin{proof}
Defining
\[
x_t:=\sum_{t=0}^tz_t,
\]
we have
\[
C(x)=-\sum_{j=1}^J\sum_{t=0}^{T} F_t(t^j_0,t^j_1)\Delta x^j_t.
\]
Rearranging terms (integrating by parts) gives
\[
C(x)=\sum_{j=1}^J\sum_{t=0}^{T-1}x^j_t\Delta F_{t+1}(t^j_0,t^j_1).
\]
Expression \eqref{eq:frates} gives
\[
\Delta F_{t+1}(t^j_0,t^j_1) = \frac{1}{(t^j_1-t^j_0)\delta}\Delta S^j_{t+1},
\]
where
\[
S^j_t:=\exp\Bigg(\sum_{s=t^j_0-t}^{t^j_1-t-1}F_t(s)\delta\Bigg).
\]
The no-arbitrage condition can thus be written in the standard form
\[
x\in\N,\ \sum_{t=0}^{T-1}x_t\cdot\Delta S_{t+1}\ge 0\ a.s.\quad\Longrightarrow\quad \sum_{t=0}^{T-1}x_t\cdot\Delta S_{t+1}=0.
\]
By Theorem~\ref{thm:js}, this holds if and only if
\begin{equation}\label{eq:rico}
S_t\in\ri\co\supp_tS_{t+1}
\end{equation}
almost surely. This can be written as
\[
(\exp(G_t^j))_{j=1}^J\in\ri\co\supp_t(\exp(G_{t+1}^j))_{j=1}^J,
\]
where
\[
G_t^j := \sum_{s=t^j_0-t}^{t^j_1-t-1}F_t(s)\delta.
\]
By \cite[Theorem~2.43.4]{pp24}, continuity of the exponential function implies 
\[
\supp_t(\exp(G_{t+1}^j))_{j=1}^J = 
\cl\exp\left(\supp_t(G_{t+1}^j)_{j=1}^J\right).
\]
The claim now follows from the fact that $\ri\co\cl C=\ri\co C$ for any set $C$ in a Euclidean space; see \cite[Theorem~6.3]{roc70a}.
\end{proof}

A simple sufficient condition for the no-arbitrage criterion in Theorem~\ref{thm:arbfut} is that
\[
\exp((G_t^j)_{j=1}^J)\in\inte\exp\left(\supp_t(G_{t+1}^j)_{j=1}^J\right),
\]
where \enquote{$\inte$} stands for the interior in the Euclidean topology. By continuity of the exponential, this condition can be written equivalently as
\begin{equation}\label{eq:Gints}
(G_t^j)_{j=1}^J\in\inte\supp_t(G_{t+1}^j)_{j=1}^J.
\end{equation}

Assume now that 
\[
F_t(s) = \sum_{k=1}^K\xi_t^k\phi^k(s-t)\quad s\ge t
\]
as in \eqref{parf}. We get
\begin{align*}
G_t^j &= \sum_{s=t^j_0-t}^{t^j_1-t-1}F_t(s)\delta = \sum_{s=t^j_0-t}^{t^j_1-t-1}\sum_{k=1}^K\xi_t^k\phi^k(s-t)\delta =  \sum_{k=1}^K\xi_t^k\Phi^{j,k}_t
\end{align*}
where
\[
\Phi^{j,k}_t := \sum_{s=t^j_0-t}^{t^j_1-t-1}\phi^k(s-t)\delta.
\]
We can express $G_t=(G^j_t)_{j=1}^J$ in the matrix notation as $G_t=\Phi_t\xi$, where $\Phi_t$ is the $J\times K$-matrix with entries $\Phi^{j,k}_t$. Thus, condition \eqref{eq:Gints} can be written as
\[
\Phi_t\xi_t\in\inte\supp_t \Phi_{t+1}\xi_{t+1}
\]
or, equivalently,
\[
\Phi_t\xi_t\in\inte\Phi_{t+1}\supp_t\xi_{t+1}.
\]

The above conditions should not, however, be taken too seriously since they are based on the assumption of perfect liquidity. The conditions may amount to excessive restrictions that do not allow for phenomena observed in real markets where futures contracts are subject to bid-ask spreads and margin requirements. One could also add small \enquote{measurement errors}, as discussed at the end of Section~\ref{sec:arbzcb}, to remove possible arbitrage opportunities.

\section{Conclusion}

This paper presents a statistical SOFR term structure model, which is easily calibrated to historical data, current market conditions as well as users' views. The model operates on a daily frequency, and it is consistent with the discrete jumps often observed on FMOC meeting dates. The model allows for fast simulations, which are needed in e.g.\ in risk management, portfolio optimization and indifference pricing of interest rate derivatives.

The calibration of the model breaks down into three separate steps summarized in Algorithms~\ref{algo:estimatexi} to~\ref{algo:spreadcalib} below. In the first step, Algorithm~\ref{algo:estimatexi}
fits a parameterized forward curve to observed daily snapshots of futures quotes. Algorithms~\ref{algo:macrocalib} and \ref{algo:spreadcalib} then calibrate a macroeconomic model and a parametric forward curve model, respectively, to historical data. Algorithm~\ref{algo:sim} summarizes the simulation procedure once the model has been calibrated. 

The overall modelling approach is modular and quite flexible. For example, the piecewise linear basis functions used to parameterize the forward curve in the numerical computations in this paper could easily be replaced by other functions. One could use e.g.~the classical Nelson-Siegel three-factor parameterization. Also, the Gaussian processes used to model the macroeconomic factors and the term structure risk factors could readily be replaced by other specifications. 

Instead of the macroeconomic model built in Algorithm~\ref{algo:macrocalib}, one could equally well plug in an alternative model that provides a description of the future development of the Fed rate $L$ that is used in step~6 of Algorithm~\ref{algo:sim} to transform the simulated process $x$ to the forward curve risk factors $\xi$. In particular, it would be interesting to study models where the forward curve has predictive power over the FOMC's future interest rate decisions. Such features could be incorporated in the framework of Section~\ref{sec:tsm} by regressing the macroeconomic variables $y$ on the forward curve risk factors $x$. Another interesting modification would be to regress the increments $\Delta y_t$ of the macroeconomic factors on their past values. Such a modification could add "inertia" to the macroeconomic variables to describe, for example, the tendency of the Fed to move interest rates in the same direction on consecutive FOMC meeting dates.

\begin{center}
\begin{minipage}{0.9\linewidth}
\begin{algorithm}[H]
\DontPrintSemicolon
\SetCustomAlgoRuledWidth{0.4pt}
  \KwIn{Daily time series for SOFR futures prices}
  \BlankLine
  \KwOut{Daily time series for term structure risk factors~$\xi_t=(\xi_t^k)_{k=1}^{K}$ (Figure~\ref{fig:xi})}
  \BlankLine
  Choose the $K$ tenors for the piecewise linear basis functions \eqref{eq:piecewise}\;
  \For{each day $t$}{
    Solve the calibration problem in \eqref{lsmid}
  }
 \caption{Construct the historical time series of the term structure risk factors~$\xi$ (Section~\ref{sec:par})}
\label{algo:estimatexi}
\end{algorithm}
\end{minipage}
\end{center}

\begin{center}
\begin{minipage}{0.9\linewidth}
\begin{algorithm}[H]
\DontPrintSemicolon
\SetCustomAlgoRuledWidth{0.4pt}
  \KwIn{Monthly historical time series of the macroeconomic variables $(L,I,G)$}
  \KwIn{Long-term medians of $(L,I,G)$ (see Table~\ref{table:ltmmacro})}
  \KwIn{Short-term median for process $L$ (The simulations of Section~\ref{sec:simexp} used the last observed forward curve computed using Algorithm~\ref{algo:estimatexi})}
  \BlankLine
 \KwOut{Autoregression matrix $A$, the constant vectors $a_t$ and the covariance matrix $\Sigma$ in equation~\eqref{eq:var} (see Tables~\ref{table:amatrixy},~\ref{table:corry})}
 \BlankLine
  Use the non-linear transformation \eqref{eq:macrotransform} to map the monthly values of $(L,I,G)$ to $y$\;
  Estimate the autoregression matrix $A$ in \eqref{eq:var} by applying linear regression to the three components of $y$\;
  Calculate an estimate of the covariance matrix $\Sigma$ using the residuals\;
  Calculate the constant vectors $a_t$ in equation~\eqref{eq:var} using the procedure of Section~4.2 of \cite{aap21} (see Section~\ref{sec:userviews})
 \caption{Calibrate the macroeconomic model (Sections~\ref{sec:datatr}, ~\ref{sec:macro}, ~\ref{sec:userviews})}
 \label{algo:macrocalib}
\end{algorithm}
\end{minipage}
\end{center}

\begin{center}
\begin{minipage}{0.9\linewidth}
\begin{algorithm}[H]
\DontPrintSemicolon
\SetCustomAlgoRuledWidth{0.4pt}
  \KwIn{Historical time series of the daily values of the term structure risk factors $\xi$ computed using Algorithm~\ref{algo:estimatexi}}
  \KwIn{Historical time series of monthly values of $L$}
  \KwIn{Long term medians of the forward curve risk factors $\xi$ (see Table~\ref{table:ltmxi})}
  \BlankLine
  \KwOut{Autoregression matrix $A$, the constant vectors $a_t$ and the covariance matrix $\Sigma$ in equation~\eqref{eq:varx} (see Tables~\ref{table:amatrixx},~\ref{table:corrx})}
  \BlankLine
  Use the nonlinear transformation given by~\eqref{eq:x_1} and~\eqref{eq:x_k} to map the daily values of  $(\xi^k)_{k=1}^K$ to $x$\;
  Estimate the autoregression matrix $A$ in~\eqref{eq:varx} by applying linear regression to the $K$ components of the time series $x$\;
  Calculate an estimate of the covariance matrix $\Sigma$ using the residuals\;
  Calculate the constant vectors $a_t$ in equation~\eqref{eq:varx} using the procedure of Section~4.2 of \cite{aap21} (see Section~\ref{sec:userviews})
 \caption{Calibrate the forward curve model (Sections~\ref{sec:datatr}, ~\ref{sec:termstruc}, ~\ref{sec:userviews})}
 \label{algo:spreadcalib}
\end{algorithm}
\end{minipage}
\end{center}

\begin{center}
\begin{minipage}{0.9\linewidth}
\begin{algorithm}[H]
\DontPrintSemicolon
\SetCustomAlgoRuledWidth{0.4pt}
  \KwIn{Autoregression matrices $A$, constant vectors $a_t$ and the covariance matrices $\Sigma$ from Algorithms~\ref{algo:macrocalib} and~\ref{algo:spreadcalib}}
  \KwIn{Starting values of $(L,I,G)$ and $\xi$}
  \KwIn{Start and end dates of the simulation}
  \KwIn{Number of scenarios to be simulated}
  \BlankLine
  \KwOut{Simulated time series for the macroeconomic variables $(L,I,G)$ (Figure~\ref{fig:simmacro})}
  \KwOut{Simulated time series of the forward curve $F$}
  \BlankLine
  Simulate Gaussian innovations $\epsilon_t$ of the macoeconomic model\;
  Calculate scenarios of the macroeconomic risk factors $y$ using equation \eqref{eq:var}\;
  Apply the inverse of the transformations \eqref{eq:macrotransform} to $y$ to get the scenarios for $(L,I,G)$\;
  Simulate Gaussian innovations $\epsilon_t$ of the forward curve model\;
  Calculate scenarios of the forward curve model $x$ using equation~\eqref{eq:varx}\;
  Apply the inverse of the transformations~\eqref{eq:x_1} and~\eqref{eq:x_k} to $x$ using simulated values of $L$ to get the scenarios for $\xi$\;
  Compute scenarios of the forward curve $F$ using expression~\eqref{eq:fcpar}
  
 \caption{Simulate the macroeconomic variables and the SOFR term structure (Section~\ref{sec:simexp})}
 \label{algo:sim}
\end{algorithm}
\end{minipage}
\end{center}

\bibliographystyle{plain}
\bibliography{sp}

\end{document}